\newtheorem{theorem}{Theorem}
\newtheorem{lemma}{Lemma}
\newtheorem{pro}{Proposition}
\begin{document}
\title{Robust Chance-Constrained Optimization for Power-Efficient and Secure SWIPT Systems}
\author{Tuan Anh Le, Quoc-Tuan Vien, Huan X. Nguyen, Derrick Wing Kwan Ng, \\and Robert Schober
\thanks{T. A. Le, Q.-T. Vien, and H. X. Nguyen are with the Faculty of Science and Technology, Middlesex University, London, NW4 4BT, UK. D. W. K. Ng is with the University of New South Wales, Sydney, NSW, Australia. R. Schober is with the Institute for Digital Communications, Friedrich-Alexander-University Erlangen-Nurnberg, Erlangen 91058, Germany. Email: \{t.le; q.vien; h.nguyen\}@mdx.ac.uk; w.k.ng@unsw.edu.au; robert.schober@fau.de}
\thanks{This paper has been presented in part at the IEEE Global Communications Conference (GLOBECOM), Washington DC, USA, Dec., 4-8, 2016.}}
\maketitle

\begin{abstract}
In this paper, we propose beamforming schemes to simultaneously transmit data securely to multiple information receivers (IRs) while transferring power wirelessly to multiple energy-harvesting receivers (ERs). Taking into account the imperfection of the instantaneous channel state information (CSI), we introduce a chance-constrained optimization problem to minimize the total transmit power while guaranteeing \textit{data transmission reliability}, \textit{data transmission security}, and \textit{power transfer reliability}. As the proposed optimization problem is non-convex due to the chance constraints, we propose two robust reformulations of the original problem based on safe-convex-approximation techniques. Subsequently, applying semidefinite programming relaxation (SDR), the derived robust reformulations can be effectively solved by standard convex optimization packages. We show that the adopted SDR is tight and thus the globally optimal solutions of the reformulated problems can be recovered. Simulation results confirm the superiority of the proposed methods in guaranteeing transmission security compared to a baseline scheme. Furthermore, the performance of proposed methods can closely follow that of a benchmark scheme where perfect CSI is available for resource allocation.
\end{abstract}

\IEEEpeerreviewmaketitle
\section{Introduction}
In a simultaneous wireless information and power transfer (SWIPT) system, in order to harvest meaningful amounts of energy, the
energy-harvesting receivers (ERs) must be located closer to the transmitter than  the conventional information receivers (IRs) \cite{Amin2016,Xiaoming2016,Qingshi2016,DerrickPT,SchoberSep2015,Kwanbookchapter17,Duy2017}. Being closer to the transmitter, the ERs will receive stronger radio frequency (RF) signals than the IRs. Since information conveyed via RF signals is always at risks of being overheard by eavesdroppers due to the broadcast nature of wireless channels, the information intended for the IRs has to be protected in order to prevent potential eavesdropping by the ERs. Due to its high computational complexity, conventional upper-layer cryptography may cause a high energy consumption at the receivers. Therefore, such techniques may not be suitable for protecting information in SWIPT systems \cite{Xiaoming2016,Lv2015} as SWIPT devices are usually energy limited. Instead, physical layer security \cite{Xiaoming2015,QuocTuanVTC16,Tuanglobecom16}, where fading, noise, and interference are exploited, is considered to be an effective method for providing secure information transmission in SWIPT systems \cite{Xiaoming2016,HZhang16}.

Both information transmission and power transfer are equally important in SWIPT systems. Due to current hardware limitations \cite{Amin2016,Suzhi16}, the  RF-to-direct-current-energy-conversion efficiencies at ERs are typically low. As a result, a relatively high transmit power is required to compensate for the path loss in the propagation environment and the energy loss in the power-conversion circuitry. On the other hand, high transmit powers increase the risk of information leakage to the ERs. Therefore, a power-efficient strategy satisfying the information transmission and power transfer requirements of the IRs and ERs, respectively, is required. Beamforming is known to improve the power efficiency of wireless communications \cite{Farrokh,Ami,TuanTcom2013,TuanWCM2,Zhang}, and is also a promising candidate for power-efficient and secure SWIPT \cite{DerrickPT,SchoberSep2015}.

The beamforming design is usually formulated as an optimization problem taking into account the system's quality of service (QoS) requirements\footnote{The system's QoS requirements may include for example minimum signal-to-interference-plus-noise ratios (SINRs) at the IRs, minimum received powers at the ERs, and maximum tolerable leakage SINRs at the ERs.} which are specified by the system operator/designer. In particular, the channel state information (CSI) of the channels between the transmitter and the receivers, which can be obtained by appropriate channel estimation techniques in practice, is exploited to optimally control the powers and phases of the beamformer \cite{TuanTcom2014}. Hence, the CSI plays an important role in optimizing the performance of SWIPT systems. In this context, most related works have assumed the availability of perfect CSI, i.e., there are no CSI estimation errors, for the design of the beamformers for SWIPT systems, e.g., \cite{Liu2014,Jie,TuanCL2015,WeiWu2015,HZhang16,Shiyang2016}. Unfortunately, CSI estimation errors are unavoidable due to the nature of wireless channels\cite{SchoberSep2015,WangNov2015}. Achieving near-perfect CSI estimation, especially for multiple users and multiple antennas, entails a high cost in terms of the required signalling overhead. Hence, assuming perfect CSI for beamforming design either imposes a high burden on communication systems or results in a resource allocation mismatch such that the QoSs of the users of the system cannot be guaranteed. Therefore, robust beamforming designs taking into account the imperfection of the CSI are desirable for relieving the signalling-overhead burden while maintaining the users' QoSs in practical SWIPT systems.

In the literature, the imperfection of the estimated CSI for the channel between a transmitter and a receiver is usually modeled as an error vector with random elements. Due to the randomness and continuity of the error vector, an infinite number of constraints have to be met to guarantee the QoS. This leads to an intractable beamforming design problem. To overcome this obstacle, the norms of the error vectors are often assumed to be bounded by known values \cite{KaiKitJan2016,DerrickPT,SchoberSep2015,HZhang16}. Then, using the S-procedure \cite{Boyd_convex}, the QoS constraints of the SWIPT system can be replaced by a finite number of constraints representing upper bounds on the CSI errors \cite{WangNov2015,KaiKitJan2016,WangJun2015,DerrickPT,SchoberSep2015,HZhang16}. Such a conservative design approach requires an exceedingly large amount of system resources to protect rarely occurring worst cases. Hence, less conservative approaches have recently been proposed which tolerate the violation of the QoS constraints with a certain chance or probability \cite{Feng15,Fuhui2016,zheng2016,KhandakerSep2016}.

Motivated by such probabilistic approaches, this paper focuses on the design of power-efficient transmission strategies for secure SWIPT systems employing imperfect CSI. The contributions of this paper can be summarized as follows.
\begin{itemize}
\item Taking into account the imperfection of the CSI, we propose an outage-based chance optimization problem with the objective to minimize the total transmit power subject to the following three sets of QoS constraints: i) the probability/chance that the received SINRs at the IRs are above required levels is higher than predefined targets; ii) the probability/chance that the leakage SINRs at the ERs exceed secure levels is below a threshold; iii) the probability/chance that the powers received by the ERs are above required levels is greater than a prescribed value. The aforementioned three types of constraints guarantee \textit{data transmission reliability}, \textit{data transmission security}, and \textit{power transfer reliability}, respectively.
\item Since the adopted probabilistic/chance constraints are non-convex, we employ two different mathematical tools, i.e., the S-procedure \cite{Boyd_convex} and a Bernstein-type inequality \cite{Bernstein-type}, to develop two safe approximations \cite{Ben-tal} of the original optimization problem. Using semidefinite programming (SDP) relaxation, the derived safe approximations are transformed into tractable SDPs which can be optimally solved by a standard interior-point method (IPM), e.g. the SeDuMi solver in CVX \cite{Boyd}.
\item The adopted SDP relaxations of the derived safe approximations are proved to be tight by showing that the relaxed/transformed problems always yield rank-one optimal solutions, and hence also constitute optimal solutions to the safely approximated problems.  Particularly, in our rank-one proof, we introduce a novel method to convert a second-order-cone-programming (SOCP) constraint into a linear-matrix-inequality (LMI) constraint.
\item Based on the worst-case runtime of the IPM \cite{Ben-tal_13}, the computational complexities of the reformulated SDP versions of the proposed optimization problems are characterized.
\end{itemize}

This paper differs from the related works in \cite{DerrickPT,SchoberSep2015,Feng15,Fuhui2016,zheng2016,KhandakerSep2016,HZhang16} in terms of the problem formulation and the mathematical solution as follows.

\emph{Problem formulation:}
 While this paper studies a secure SWIPT wireless system, the authors of \cite{Fuhui2016} considered a secure cognitive SWIPT system.
This paper takes into account the imperfectness of the CSI of all IRs and all ERs whereas perfect knowledge of the IRs' CSI was assumed in \cite{DerrickPT,SchoberSep2015}.
 A common target in the related literature is the minimization of the transmit power while ensuring the probabilities that the secrecy rate of each IR \cite{HZhang16,zheng2016,KhandakerSep2016} and the harvested power at each ER \cite{HZhang16,KhandakerSep2016} are above certain required levels. Thereby, the required secrecy rate constitutes one of the design parameters which can be set by the system operator/designer. The authors of \cite{HZhang16} further proposed a two-stage optimization approach to decompose the constraint on the secrecy rate into constraints on the leakage SINRs at the ERs and the SINRs at the IRs by introducing two new optimization variables. Since these variables are tuned by the algorithm in \cite{HZhang16}, the operator/designer cannot enforce/guarantee any target values for the leakage SINRs or the IR SINRs. In contrast, the problem formulation in this paper enables the operator/designer to set target values for the maximum leakage SINRs at the ERs and the minimum SINRs at the IRs. As a result, the proposed problem formulation can guarantee that the IRs' leakage information at the ERs remains below a secure level should they try to eavesdrop, e.g., below the decoding sensitivity of the ERs. In contrast, the approaches in \cite{Fuhui2016,zheng2016,KhandakerSep2016,HZhang16} cannot accomplish this. If proper secrecy codes are used, e.g., \cite{Sabbaghian11}, then having only one constraint on the secrecy rate should be sufficient for a secure transmission. However, if conventional error correcting codes are used, e.g., \cite{Metzner}, it may indeed be beneficial to consider the constraints on the IR SINRs and the ER SINRs separately. In fact with conventional error correcting codes, even if the secrecy rate of the IR is kept above a certain required level \cite{Fuhui2016,zheng2016,KhandakerSep2016,HZhang16}, the ERs may still be able to eavesdrop the IR's message if their decoding sensitivity levels are lower than the leakage SINR. Hence, in this case, the schemes in \cite{Fuhui2016,zheng2016,KhandakerSep2016,HZhang16} are less secure than the proposed schemes. Also, the optimization problem considered in this paper is more challenging than its counterparts in \cite{Feng15,Fuhui2016,zheng2016,KhandakerSep2016} as secure information transmission to multiple IRs is considered whereas \cite{Feng15} does not consider secrecy at all and \cite{Fuhui2016,zheng2016,KhandakerSep2016} protect only a single IR.

\emph{Mathematical aspects:} References \cite{HZhang16}, \cite{Feng15}, and \cite{Fuhui2016} do not prove the rank-one property of the optimal beamforming solution when SDR is applied. On the other hand, the authors of \cite{zheng2016,KhandakerSep2016} have used some inequalities to transform their SOCP constraints into LMI constraints. Although the resulting transformed optimization problems are shown to yield rank-one solutions, the employed transformations reduce the size of the feasible region of the original problem which may lead to infeasibility. In contrast, in this paper, we transform an SOCP constraint into an LMI constraint without imposing any restriction. Therefore, the SDP relaxations of the considered optimization problems, i.e., the derived safe approximations, are tight.

\emph{\textbf{Notation}:}
Lower and upper case letter $y$ and $Y$: a scalar; bold lower case letter $\mathbf{y}$: a column vector; bold upper case letter $\mathbf{Y}$: a matrix; $\left\|\cdot\right\|$: the Euclidean norm; $\left\|\cdot\right\|_F$: the Frobenius norm; $(\cdot)^T$: the transpose operator; $(\cdot)^H$: the complex conjugate transpose operator; $\textrm{Tr}\left(\cdot\right)$: the trace operator; $\textrm{Pr}\left(\cdot\right)$: the probability of an event; $\mathcal{O}(\cdot)$: the big-O notation; $\mathbf{Y}\succeq \mathbf{0}$: $\mathbf{Y}$ is positive semidefinite; $\mathbf{y}\succcurlyeq \mathbf{0}$: all elements of vector $\mathbf{y}$ are non-negative; $\mathbf{I}_x$: an $x \times x$ identity matrix; $\mathbf{0}_{A \times 1}$: an $A \times 1$ vector of all zero elements; $\mathbf{0}_{A \times B}$: an $A \times B$ matrix of all zero elements; $\textrm{Re}\{\cdot\}$: the real part of a complex number; $\textrm{Eig}_{\textrm{max}}\left(\mathbf{Y}\right)$: the maximum eigenvalue of $\mathbf{Y}$; $s^+(\mathbf{Y}): \ \textrm{max} \{\textrm{Eig}_{\textrm{max}}(\mathbf{Y}),0\}$; $\textrm{vec}\left(\mathbf{Y}\right)$: stacking all the entries of $\mathbf{Y}$ into a column vector; $\mathbb{R}$: the set of all real scalars ; $\mathbb{C}^{M\times 1}$: the set of all $M\times 1$ vectors with complex elements; $\mathbb{H}^{M\times M}$: the set of all $M\times M$ Hermitian matrices; $y\sim\mathcal{CN}(0,\sigma^2)$: $y$ is a zero-mean circularly symmetric complex Gaussian random variable with variance $\sigma^2$; $\mathbf{y}\sim\mathcal{CN}(\mathbf{0},\mathbf{Y})$: $\mathbf{y}$ is a zero-mean circularly symmetric complex Gaussian random vector with covariance matrix $\mathbf{Y}$; $\mathbf{Y}^{1/2}$: the square root of $\mathbf{Y}$; rank($\mathbf{Y}$): rank of $\mathbf{Y}$; $\forall$: for all.

\section{System Model}
\begin{figure}
\centering
    \includegraphics[width=.45\textwidth]{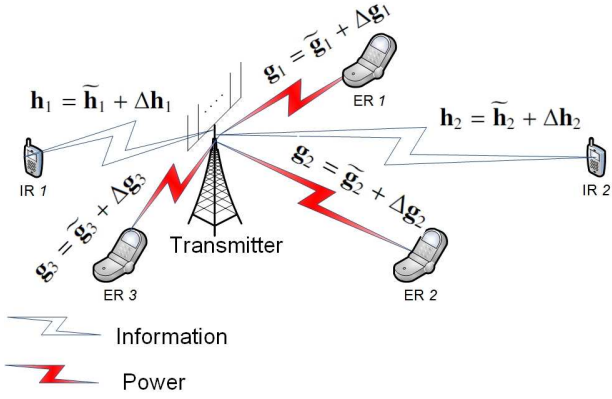}
\caption{Example of considered system model comprising a multiple-antenna transmitter, $U=2$ single-antenna IRs, and $N=3$ single-antenna ERs.} \vspace{-0.25 in}\label{sys}
\end{figure}
In this paper, we consider a downlink SWIPT system where a transmitter equipped with $M>1$ antennas simultaneously transmits information and power to $U$ IRs and $N$ ERs, respectively, using RF signals, see Fig.~\ref{sys}. Each IR and each ER is equipped with a single antenna. In order to provide secure communication, the ERs are treated as potential eavesdroppers. Hence, the power of the information-carrying signals intended for the IRs but received at the ERs should be kept low. On the other hand, due to the low energy conversion efficiency at the ERs, a high received power level is required at each ER to compensate for the power-conversion loss. Therefore, in this paper, we advocate the use of artificial noise to achieve both of these goals  \cite{DerrickPT,KaiKitJan2016,SchoberSep2015}.

Let $\mathbf{h}_{i}\in\mathbb{C}^{M\times 1}$, $i\in \{1,\cdots,U\}$, and $\mathbf{g}_{t}\in\mathbb{C}^{M \times 1}$, $t\in \{1,\cdots,N\}$, represent the actual channel coefficients of
the $i$-th IR and the $t$-th ER, respectively. Let $\mathbf{w}_{i}\in\mathbb{C}^{M \times 1}$ and $s^{(I)}_{i}\sim\mathcal{CN}(0,1)$, respectively, denote
the beamforming vector and the data for the $i$-th IR. Let  $\mathbf{v}_{t}\in\mathbb{C}^{M \times 1}$ and $s^{(E)}_{t}\sim\mathcal{CN}(0,1)$, respectively, be the artificial-noise beamforming vector and the artificial noise for the $t$-th ER. The signals received by the $i$-th IR and the $t$-th ER are, respectively, given by
\begin{eqnarray}
y^{(I)}_{i}=\sum_{j=1}^{U}\mathbf{h}^H_{i}\mathbf{w}_{j}s^{(I)}_{j}+
\sum_{t=1}^{N}\mathbf{h}^H_{i}\mathbf{v}_{t}s^{(E)}_{t}+n^{(I)}_{i}\label{signal1}
\end{eqnarray}
and
\begin{eqnarray}
y^{(E)}_{t}=\sum_{j=1}^{U}\mathbf{g}^H_{t}\mathbf{w}_{j}s^{(I)}_{j}+
\sum_{p=1}^{N}\mathbf{g}^H_{t}\mathbf{v}_{p}s^{(E)}_{p}+n^{(E)}_{t}.\label{signal2}
\end{eqnarray}
Here,  $n^{(I)}_{i}\sim\mathcal{CN}(0,\sigma^2_{I,i})$ and $n^{(E)}_{t}\sim\mathcal{CN}(0,\sigma^2_{E,t})$ are the zero-mean circularly
symmetric complex additive white Gaussian noises observed at the $i$-th IR and the $t$-th ER, respectively.

We assume that the CSI estimation at the BS is imperfect. Similar to \cite{DerrickPT,WangNov2015,KaiKitJan2016,WangJun2015,SchoberSep2015,HZhang16,Feng15,Fuhui2016,zheng2016,KhandakerSep2016}, we model the channel as $\mathbf{h}_{i}=\widetilde{\mathbf{h}}_{i}+\Delta\mathbf{h}_{i}$ and
$\mathbf{g}_{t}=\widetilde{\mathbf{g}}_{t}+\Delta\mathbf{g}_{t}$,
where $\widetilde{\mathbf{h}}_{i}\in\mathbb{C}^{M \times 1}$ and $\Delta\mathbf{h}_{i}\in\mathbb{C}^{M \times 1}$ are the estimated value of $\mathbf{h}_{i}$ and the corresponding estimation error, respectively; $\widetilde{\mathbf{g}}_{t}\in\mathbb{C}^{M \times 1}$ and $\Delta\mathbf{g}_{t}\in\mathbb{C}^{M \times 1}$ are the estimated value of $\mathbf{g}_{t}$ and the corresponding estimation error, respectively. We further assume that $\Delta\mathbf{h}_{i}\sim\mathcal{CN}(\mathbf{0},\mathbf{H}_i)$ and $\Delta\mathbf{g}_{t}\sim\mathcal{CN}(\mathbf{0},\mathbf{G}_t)$, where $\mathbf{H}_i\succeq \mathbf{0}$ and $\mathbf{G}_t\succeq \mathbf{0}$ are the channel estimation error covariance matrices which are assumed to be known for beamformer design. Interested readers are referred to \cite{Besson2008} for techniques to estimate covariance matrices. Let
$\Delta\mathbf{h}_{i}=\mathbf{H}_i^{1/2}\mathbf{e}_{i}$ and
$\Delta\mathbf{g}_{t}=\mathbf{G}_t^{1/2}\mathbf{r}_{t}$,
where $\mathbf{e}_{i}\sim\mathcal{CN}(\mathbf{0},\mathbf{I}_M)$, $\mathbf{r}_{t}\sim\mathcal{CN}(\mathbf{0},\mathbf{I}_M)$. We denote  $\{\mathbf{w}_{i}\}=\{\mathbf{w}_{1},\cdots,\mathbf{w}_{U}\}$ as the set of data beamforming vectors for all IRs and  $\{\mathbf{v}_{t}\}=\{\mathbf{v}_{1},\cdots,\mathbf{v}_{N}\}$ as the set of artificial-noise beamforming vectors.
The received SINR at the $i$-th IR, denoted by $\Gamma_{i}\left( \{\mathbf{w}_{i}\},\{\mathbf{v}_{t}\}\right)$, and the \emph{leakage SINR} of the signal intended for the $i$-th IR at the $t$-th ER, denoted by $\Gamma^{(t)}_{i}\left( \{\mathbf{w}_{i}\},\{\mathbf{v}_{t}\}\right)$, are given in \eqref{second} and \eqref{second2} at the top of next page, respectively.
\begin{figure*}
\begin{eqnarray}
\Gamma_{i}\left( \{\mathbf{w}_{i}\},\{\mathbf{v}_{t}\}\right)=\frac{\mathbf{w}_{i}^H
\left(\widetilde{\mathbf{h}}_{i}+\mathbf{H}_i^{1/2}\mathbf{e}_{i}\right)
\left(\widetilde{\mathbf{h}}_{i}+\mathbf{H}_i^{1/2}\mathbf{e}_{i}\right)^H
\mathbf{w}_{i}}
{\sum_{j=1,j \neq i}^U\mathbf{w}_{j}^H\left(\widetilde{\mathbf{h}}_{i}+\mathbf{H}_i^{1/2}
\mathbf{e}_{i}\right)
\left(\widetilde{\mathbf{h}}_{i}+\mathbf{H}_i^{1/2}\mathbf{e}_{i}\right)^H
\mathbf{w}_{j}
+\sum_{t=1}^N
\mathbf{v}_{t}^H\left(\widetilde{\mathbf{h}}_{i}+\mathbf{H}_i^{1/2}\mathbf{e}_{i}
\right)\left(\widetilde{\mathbf{h}}_{i}+\mathbf{H}_i^{1/2}\mathbf{e}_{i}\right)^H
\mathbf{v}_{t}
+\sigma^2_{I,i}},\label{second}\\
\Gamma^{(t)}_{i}\left( \{\mathbf{w}_{i}\},\{\mathbf{v}_{t}\}\right)=\frac{\mathbf{w}_{i}^H
\left(\widetilde{\mathbf{g}}_{t}+\mathbf{G}_t^{1/2}\mathbf{r}_{t}\right)\left(\widetilde{\mathbf{g}}_{t}
+\mathbf{G}_t^{1/2}\mathbf{r}_{t}\right)^H\mathbf{w}_{i}}
{\sum_{j=1,j \neq i}^U\mathbf{w}_{j}^H\left(\widetilde{\mathbf{g}}_{t}+\mathbf{G}_t^{1/2}\mathbf{r}_{t}\right)\left(\widetilde{\mathbf{g}}_{t}
+\mathbf{G}_t^{1/2}\mathbf{r}_{t}\right)^H\mathbf{w}_{j}+
\sum_{p=1}^N
\mathbf{v}_{p}^H\left(\widetilde{\mathbf{g}}_{t}+\mathbf{G}_t^{1/2}\mathbf{r}_{t}\right)\left(\widetilde{\mathbf{g}}_{t}
+\mathbf{G}_t^{1/2}\mathbf{r}_{t}\right)^H\mathbf{v}_{p}
+\sigma^2_{E,t}}.\label{second2}
\end{eqnarray}
\hrulefill
\end{figure*}
The total power received by the $t$-th ER, denoted by $\Phi_{t}\left( \{\mathbf{w}_{i}\},\{\mathbf{v}_{t}\}\right)$, is given by
\begin{eqnarray}
\Phi_{t}\left( \{\mathbf{w}_{i}\},\{\mathbf{v}_{t}\}\right)&=&
 \sum_{i=1}^U \mathbf{w}^H_{i}\left(\!\widetilde{\mathbf{g}}_{t}\!+\!\mathbf{G}_t^{1/2}\mathbf{r}_{t}\!\right)\!\left(\!\widetilde{\mathbf{g}}_{t}
+\mathbf{G}_t^{1/2}\mathbf{r}_{t}\!\right)^H\mathbf{w}_{i}\nonumber\\
&{}&+ \sum_{p=1}^N \mathbf{v}^H_{p}\left(\!\widetilde{\mathbf{g}}_{t}+\mathbf{G}_t^{1/2}\mathbf{r}_{t}\!\right)\!\left(\!\widetilde{\mathbf{g}}_{t}
+\mathbf{G}_t^{1/2}\mathbf{r}_{t}\!\right)^H\mathbf{v}_{p}. \label{TolP}
\end{eqnarray}

Hereafter, unless otherwise stated, $\{i,j\} \in \{1,\cdots,U\}$,  and $\{t,p\} \in \{1,\cdots,N\}$.
\section{Proposed Robust Chance-Constrained Optimization Problem}

The communication between the transmitter and the IRs and the power transfer to  the ERs are considered to be in QoS outage if either one of the following cases occurs: (1) The SINR level at the $i$-th IR falls below a required level $\gamma_i$, $\forall i$, which is referred to as {\it SINR outage}; (2) the leakage SINR of the $i$-th IR at the $t$-th ER is above a secure level $\gamma^{(t)}_{i}$, $\forall i, \forall t$, which is referred to as {\it leakage-SINR outage}; (3) the received power at the $t$-th ER is below a required level $P_t$, $\forall t$, which is referred to as {\it power-transfer outage}.
\subsection{Problem Formulation}

Aiming to design a power-efficient beamforming scheme, we optimize the data beamforming vector set $\{\mathbf{w}_{i}\}$ and the artificial-noise beamforming vector set $\{\mathbf{v}_{t}\}$ for minimization of the total transmit power subject to probabilistic/chance constraints on SINR outages, leakage-SINR outages, and power-transfer outages. The design is formulated as the following optimization problem:
\begin{equation}
\begin{aligned}\label{prob_02new}
& \displaystyle \min_{\{\mathbf{w}_{i}\},\{\mathbf{v}_{t}\}} & &
\sum_{i=1}^U \|\mathbf{w}_{i} \|^2+
\sum_{t=1}^N \|\mathbf{v}_{t}\|^2\\
& \text{s.\ t.}\ & &\textrm{Pr}\left(\Gamma_{i} \left( \{\mathbf{w}_{i}\},\{\mathbf{v}_{t}\}\right)\geq \gamma_{i}\right) \geq 1-\rho_i, \forall i,\\
&&& \textrm{Pr}\left(\Gamma^{(t)}_{i} \left( \{\mathbf{w}_{i}\},\{\mathbf{v}_{t}\}\right)\leq \gamma^{(t)}_{i}\right) \geq 1-\rho^{(t)}_i,  \forall i, \forall t,
\\ & & & \textrm{Pr}\left(\Phi_{t}\left( \{\mathbf{w}_{i}\},\{\mathbf{v}_{t}\}\right)\geq P_{t}\right) \geq 1-\varrho_t, \forall t,
\end{aligned}
\end{equation}
where
$\rho_i\in(0,1]$, $\rho_i^{(t)}\in(0,1]$, and $\varrho_t\in(0,1]$ are the predefined maximum tolerable probabilities/chances of SINR outages, leakage-SINR outages, and power-transfer outages, respectively.
The events $\Gamma_{i} \left( \{\mathbf{w}_{i}\},\{\mathbf{v}_{t}\}\right)\geq \gamma_{i}$ and $\Gamma^{(t)}_{i} \left( \{\mathbf{w}_{i}\},\{\mathbf{v}_{t}\}\right)\leq \gamma^{(t)}_{i}$ in the first and second sets of probabilistic/chance constraints in \eqref{prob_02new} are non-convex with respect to $\{\mathbf{w}_{i}\}$ and $\{\mathbf{v}_{t}\}$.\footnote{Note that the event $\Phi_{t}\left( \{\mathbf{w}_{i}\},\{\mathbf{v}_{t}\}\right)\geq P_{t}$ in the third constraint in \eqref{prob_02new} is convex as it is a quadratic form in $\{\mathbf{w}_{i}\}$ and $\{\mathbf{v}_{t}\}$ with positive coefficients for the second-degree terms, see \eqref{TolP}.} In the sequel, we transform these events into convex forms by introducing new variables.

To this end, we define
data beamforming matrix $\mathbf{W}_{i}=\mathbf{w}_{i}\mathbf{w}_{i}^H$ and  artificial-noise beamforming matrix $\mathbf{V}_{t}=\mathbf{v}_{t}\mathbf{v}_{t}^H$
where $\mathbf{W}_{i}\succeq \mathbf{0}$, $\mathbf{V}_{t}\succeq \mathbf{0}$, $\mathbf{W}_{i}\in \mathbb{H}^{M\times M}$, $\mathbf{V}_{t}\in \mathbb{H}^{M\times M}$, and $\mathbf{W}_{i}$  and $\mathbf{V}_{t}$ are rank-one matrices.\footnote{A matrix is rank-one
if and only if it has only one linearly independent column/row.}
Using $\mathbf{x}^{H}\mathbf{y}\mathbf{y}^H
\mathbf{x}=\mathbf{y}^{H}\mathbf{x}\mathbf{x}^H
\mathbf{y}$, we rewrite the SINR event of IR $i$, $\Gamma_{i} \left( \{\mathbf{w}_{i}\},\{\mathbf{v}_{t}\}\right)\geq \gamma_{i}$, as:
\begin{equation}
\left(\widetilde{\mathbf{h}}_{i}+\mathbf{H}_i^{1/2}\mathbf{e}_{i}\right)^H\mathbf{A}_i\left(\widetilde{\mathbf{h}}_{i}
+\mathbf{H}_i^{1/2}\mathbf{e}_{i}\right)\geq \sigma^2_{I,i},\label{co_sinr}
\end{equation}
where $\mathbf{A}_i=\left(1+\frac{1}{\gamma_i}\right)\mathbf{W}_i-\mathbf{C}$ and $\mathbf{C}=\sum_{j=1}^U\mathbf{W}_j+
\sum_{t=1}^N\mathbf{V}_t$.

Further manipulations exploiting the property $\mathbf{H}_i^H=\mathbf{H}_i$ of covariance matrices lead to the following equivalent form of \eqref{co_sinr}:
\begin{eqnarray}
f_i(\mathbf{e}_i)&\triangleq&\mathbf{e}_i^H\mathbf{H}_i^{1/2}\mathbf{A}_i
\mathbf{H}_i^{1/2}\mathbf{e}_i+2 \textrm{Re} \{\mathbf{e}_i^H\mathbf{H}_i^{1/2}\mathbf{A}_i\widetilde{\mathbf{h}}_i\}\nonumber \\&{}& +
\widetilde{\mathbf{h}}_i^H\mathbf{A}_i\widetilde{\mathbf{h}}_i-\sigma^2_{I,i}\geq 0.
\label{co_sinr2}
\end{eqnarray}
Similarly, the information leakage event of IR $i$, $\Gamma^{(t)}_{i} \left( \{\mathbf{w}_{i}\},\{\mathbf{v}_{t}\}\right)\leq \gamma^{(t)}_{i}$, can be recast as
\begin{eqnarray}
k^{(t)}_i(\mathbf{r}_{t})&\triangleq&\mathbf{r}_{t}^H\mathbf{G}_t^{1/2}\mathbf{B}_i\mathbf{G}_t^{1/2}\mathbf{r}_{t}+2 \textrm{Re} \{\mathbf{r}_{t}^H\mathbf{G}_t^{1/2}\mathbf{B}_i\widetilde{\mathbf{g}}_t\}\nonumber \\&{}& +
\widetilde{\mathbf{g}}_t^H\mathbf{B}_i\widetilde{\mathbf{g}}_t+\sigma^2_{E,t}\geq 0,\label{con_leakage}
\end{eqnarray}
where $\mathbf{B}_i=\mathbf{C}-\left(1+\frac{1}{\gamma^{(t)}_i}\right)\mathbf{W}_i$.

Furthermore, the harvested power event of ER $t$, $\Phi_{t}\left( \{\mathbf{w}_{i}\},\{\mathbf{v}_{t}\}\right)\geq P_{t}$, is equivalent to:
\begin{eqnarray}
d_t(\mathbf{r}_{t})&\triangleq&\mathbf{r}_{t}^H\mathbf{G}_t^{1/2}\mathbf{C}\mathbf{G}_t^{1/2}\mathbf{r}_{t}+2 \textrm{Re} \{\mathbf{r}_{t}^H\mathbf{G}_t^{1/2}\mathbf{C}\widetilde{\mathbf{g}}_t\}\nonumber \\&{}& +
\widetilde{\mathbf{g}}_t^H\mathbf{C}\widetilde{\mathbf{g}}_t-P_t\geq 0.\label{con_pow}
\end{eqnarray}

Using \eqref{co_sinr2}, \eqref{con_leakage}, and \eqref{con_pow},
\eqref{prob_02new} can be equivalently stated as:
\begin{equation}
\begin{aligned}\label{prob_sdp}
& \displaystyle \min_{\{\mathbf{W}_{i}\},\{\mathbf{V}_t\}\in \mathbb{H}^{M\times M}} & & \textrm{Tr}\left(\sum_{i=1}^U\mathbf{W}_{i}
+ \sum_{t=1}^N\mathbf{V}_{t}\right)\\
& \text{s.\ t.}\ & &\textrm{Pr}\left( f_i(\mathbf{e}_i)\geq 0\right)\geq 1-\rho_i, \ \forall i,
\\ & & &
\textrm{Pr}\left( k^{(t)}_i(\mathbf{r}_{t})\geq 0\right)\geq 1-\rho^{(t)}_i, \ \forall i, \forall t, \\ &&&
 \textrm{Pr}\left( d_t(\mathbf{r}_{t})\geq 0 \right)\geq 1- \varrho_t, \forall t,
 \\ &&& \mathbf{W}_{i}\succeq \mathbf{0},\ \forall i, \ \mathbf{V}_{t}\succeq \mathbf{0},\ \forall t,
 \\ &&& \textrm{rank}(\mathbf{W}_{i})=1,\ \forall i, \ \textrm{rank}(\mathbf{V}_{t})=1,\ \forall t,
\end{aligned}
\end{equation}
where
$\{\mathbf{W}_{i}\}=\{\mathbf{W}_{1},\cdots,\mathbf{W}_{U}\}$ and $\{\mathbf{V}_{t}\}=\{\mathbf{V}_{1},\cdots,\mathbf{V}_{N} \}$ are two sets of beamforming matrices. Solving problem \eqref{prob_sdp} is challenging due to the fact that the probabilistic constraints neither have simple closed-forms nor admit convexity.\footnote{Although events $f_i(\mathbf{e}_i)\geq 0$, $k^{(t)}_i(\mathbf{r}_{t})\geq 0$, and $d_t(\mathbf{r}_{t})\geq 0$ are convex, the corresponding probabilistic constraints  in \eqref{prob_sdp} are not convex.} In other words, \eqref{prob_sdp} is an NP-hard problem which cannot be solved in polynomial time. To overcome this challenge, our goal is to derive convex upper bounds for the chance constraints in \eqref{prob_sdp}.
\subsection{Safe Approximations}
First, we tackle the intractable probabilistic constraints in \eqref{prob_sdp} by replacing them by computationally tractable, i.e., convex approximations \cite{Ben-tal}. These approximations result in a convex optimization problem with respect to $\{\mathbf{W}_i\}$ and $\{\mathbf{V}_t\}$, which can be regarded as a safe approximation \cite{Ben-tal,Kun-Yu2014} if every feasible solution to the approximated problem is also feasible for the original problem \eqref{prob_sdp}.\footnote{We note that a feasible solution to the original problem \eqref{prob_sdp} may be  infeasible for the approximated problem.} In other words, the optimal solution to the safe approximation problem is a feasible suboptimal solution to the original problem. Therefore, the problem based on safe approximations serves as an upper bound for the original problem  \cite{Ben-tal,Kun-Yu2014}. Recently, three safe approximation methods have been introduced in \cite{Kun-Yu2014}, namely Method I: sphere bounding; Method II: Bernstein-type inequality; and Method III: decomposition-based large deviation inequality. In Method I, the chance constraints are approximated by assuming a spherical bound on the norm of the error vectors, while in Methods II and III, large deviation inequalities for complex Gaussian quadratic forms are utilized for constructing efficiently computable convex approximations. As reported in \cite{Kun-Yu2014}, Method II generally results in the tightest approximation among the three methods. The results in \cite{Kun-Yu2014} also indicate that in terms of power efficiency and feasibility rate,\footnote{The feasibility rate is the probability that the optimization problem is feasible \cite{Kun-Yu2014}.} Method II yields best performance followed by Method I and Method III. The poor performance of Method III is due to the fact that its approximation tightness is sacrificed for improved computational efficiency. Since, in this paper, we are interested in developing power-efficient strategies, in the following, we adopt Methods I and II to derive two safe approximations for \eqref{prob_sdp}. It is noted that our optimization problem is fundamentally different from the one considered in \cite{Kun-Yu2014} where the objective is to minimize the total transmit power subject to IR rate outage constraints for a multiple-input single-output (MISO) downlink system.
\section{Proposed Safe Approximations}
In this section, we introduce two safe approximations of \eqref{prob_sdp} and perform a complexity analysis.
\subsection{S-procedure Based Method}\label{bound_sec}
Let us assume that $\mathbf{e}_{i}$ and $\mathbf{r}_{t}$ are confined to the complex spherical sets $\xi_i\triangleq \{\mathbf{e}_{i} \in \mathbb{C}^{M\times 1}\ | \ \|\mathbf{e}_{i}\|^2\leq R_i^2\}$ and $\psi_t\triangleq \{\mathbf{r}_{t} \in \mathbb{C}^{M\times 1}\ | \ \|\mathbf{r}_{t}\|^2\leq Q_t^2\}$ having $M$ dimensions and radii $R_i$ and $Q_t$, respectively.
Since the error vector $\mathbf{e}_{i}\sim\mathcal{CN}(0,\mathbf{I}_M)$ is confined to the spherical set $\xi_i$, the probabilistic/chance constraint $\textrm{Pr}\left( f_i(\mathbf{e}_i)\geq 0\right)\geq 1-\rho_i$ holds if \cite{Kun-Yu2014}
\begin{equation}
f_i(\mathbf{e}_i)\geq 0 \ \textrm{and} \ \textrm{Pr}\left( \mathbf{e}_i \in \xi_i\right)\geq 1-\rho_i.\label{ICDF}
\end{equation}
The second condition in \eqref{ICDF} is always true if the radius of the spherical set $\xi_i$ is selected such that $R_i=\sqrt{\frac{\mathfrak{T}_m\left( 1-\rho_i\right)}{2}}$ where $\mathfrak{T}_m\left( \cdot\right)$ is the inverse cumulative distribution function of a Chi-square random variable with $m=2M$ degrees of freedom. Therefore, using $\|\mathbf{e}_i\|^2 =\mathbf{e}_i^H\mathbf{I}_M\mathbf{e}_i$, the probabilistic constraint $\textrm{Pr}\left( f_i(\mathbf{e}_i)\geq 0\right)\geq 1-\rho_i$ can be safely approximated by the following two constraints:
\begin{equation}
f_i\left( \mathbf{e}_i\right) \geq 0\ \textrm{and} \ \mathbf{e}_i^H\mathbf{I}_M\mathbf{e}_i-\frac{\mathfrak{T}_m\left( 1-\rho_i\right)}{2}\leq0.
\end{equation}

Applying similar steps for $\textrm{Pr}\left( k^{(t)}_i(\mathbf{r}_{t})\geq 0\right)\geq \rho^{(t)}_i$
and $\textrm{Pr}\left( d_t(\mathbf{r}_{t})\geq 0 \right)\geq \varrho_t$, we introduce the safe approximation of problem \eqref{prob_sdp} as
\begin{equation}
\begin{aligned}\label{prob_sdp2}
& \displaystyle \min_{\{\mathbf{W}_{i}\},\{\mathbf{V}_t\}\in \mathbb{H}^{M\times M}} & & \textrm{Tr}\left(\sum_{i=1}^U\mathbf{W}_{i}
+ \sum_{t=1}^N\mathbf{V}_{t}\right)\\
& \text{s.\ t.}\ & & f_i(\mathbf{e}_i)\geq 0, \  \mathbf{e}_i^H\mathbf{I}_M\mathbf{e}_i-\frac{\mathfrak{T}_m\left(1- \rho_i\right)}{2}\leq 0,\ \forall i,
\\ & & &
 k^{(t)}_i(\mathbf{r}_{t})\geq 0, \  \mathbf{r}_{t}^H\mathbf{I}_M\mathbf{r}_{t}-\frac{\mathfrak{T}_m\left( 1-\rho_i^{(t)}\right)}{2}\leq 0,\ \forall i, \ \forall t,
 \\ &&&
  d_t(\mathbf{r}_{t})\geq 0 , \ \mathbf{r}_{t}^H\mathbf{I}_M\mathbf{r}_{t}-\frac{\mathfrak{T}_m\left( 1-\varrho_t\right)}{2}\leq 0,\ \forall t,
 \\ &&& \mathbf{W}_{i}\succeq \mathbf{0},\  \forall i, \ \mathbf{V}_{t}\succeq \mathbf{0},\ \forall t,
 \\ &&& \textrm{rank}(\mathbf{W}_{i})=1,\ \forall i, \ \textrm{rank}(\mathbf{V}_{t})=1,\ \forall t.
\end{aligned}
\end{equation}

{\it Remark 1:} The values of the radii $R_i$ and $Q_t$ of the spherical sets $\xi_i$ and $\psi_t$ are not required in \eqref{prob_sdp2} as they have been implicitly incorporated into the outages, e.g., $R_i=\sqrt{\frac{\mathfrak{T}_m\left(1- \rho_i\right)}{2}}$. In other words, these radii are determined by the maximum tolerable outage values, e.g., $\rho_i$.

The number of constraints in \eqref{prob_sdp2} is infinite\footnote{ Problem \eqref{prob_sdp2} is a semi-infinite optimization problem, i.e., an optimization problem with a finite number of variables and an infinite number of constraints.} due to the randomness and continuousness of the error vectors $\mathbf{e}_i$ and $\mathbf{r}_{t}$. To proceed, we introduce the following lemma.
\begin{lemma}
[S-procedure\cite{Boyd_convex}] Let
$m_n(\mathbf{x})=\mathbf{x}^H\mathbf{Y}_n\mathbf{x}+2\textrm{Re}
\{\mathbf{x}^H\mathbf{y}_n\}+c_n,\ n\in\{1,2\}$, where $\mathbf{Y}_n\in \mathbb{H}^{M\times M}$, $\mathbf{y}_n\in \mathbb{C}^{M\times 1}$, and $c_n\in \mathbb{R}$. If there exists an $\check{\mathbf{x}}$ such that $m_n(\check{\mathbf{x}})<0$, then $\forall \mathbf{x} \in \mathbb{C}^{M\times 1}$, the following statements are equivalent:
\begin{enumerate}
\item $m_1(\mathbf{x})\geq 0$ and $m_2(\mathbf{x})\leq 0$ are satisfied $\forall \mathbf{x} \in \mathbb{C}^{M\times 1}$.
\item There exists a $\beta \geq 0$ such that \[
\begin{bmatrix} \mathbf{Y}_1&\mathbf{y}_1\\\ \mathbf{y}_1^H&c_1\end{bmatrix}+\beta
\begin{bmatrix} \mathbf{Y}_2&\mathbf{y}_2\\\ \mathbf{y}_2^H& c_2\end{bmatrix}\succeq \mathbf{0}.\]
\end{enumerate}
\end{lemma}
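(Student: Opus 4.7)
The plan is to homogenize both quadratic functions by introducing the augmented vector $\tilde{\mathbf{x}}=[\mathbf{x}^T,1]^T\in\mathbb{C}^{M+1}$ and the $(M+1)\times(M+1)$ Hermitian matrices $\mathbf{M}_n=\begin{bmatrix}\mathbf{Y}_n&\mathbf{y}_n\\ \mathbf{y}_n^H&c_n\end{bmatrix}$, so that $m_n(\mathbf{x})=\tilde{\mathbf{x}}^H\mathbf{M}_n\tilde{\mathbf{x}}$. Statement (2) then becomes $\mathbf{M}_1+\beta\mathbf{M}_2\succeq\mathbf{0}$, and by dividing an arbitrary $\tilde{\mathbf{x}}\in\mathbb{C}^{M+1}$ by its last coordinate when non-zero (and invoking a continuity/leading-coefficient argument for the last-coordinate-zero part) this is equivalent to the pointwise inequality $m_1(\mathbf{x})+\beta m_2(\mathbf{x})\ge 0$ for all $\mathbf{x}\in\mathbb{C}^M$. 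I will adopt the standard S-procedure reading of statement (1), namely the implication $m_2(\mathbf{x})\le 0\Rightarrow m_1(\mathbf{x})\ge 0$, $\forall\mathbf{x}$.

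The forward direction $(2)\Rightarrow(1)$ is a one-line verification: for any $\mathbf{x}$ with $m_2(\mathbf{x})\le 0$, the identity $\tilde{\mathbf{x}}^H(\mathbf{M}_1+\beta\mathbf{M}_2)\tilde{\mathbf{x}}=m_1(\mathbf{x})+\beta m_2(\mathbf{x})\ge 0$ combined with $\beta\ge 0$ forces $m_1(\mathbf{x})\ge -\beta m_2(\mathbf{x})\ge 0$.

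For the converse $(1)\Rightarrow(2)$ I would recast (1) as $p^\star\triangleq\inf\{m_1(\mathbf{x}):m_2(\mathbf{x})\le 0\}\ge 0$, view this as a single-constraint (possibly non-convex) QCQP, and appeal to the fact that the Lagrangian dual $d^\star=\sup_{\beta\ge 0}\inf_{\mathbf{x}}\{m_1(\mathbf{x})+\beta m_2(\mathbf{x})\}$ satisfies $d^\star=p^\star$ with attainment. An attaining $\beta^\star\ge 0$ then gives $m_1(\mathbf{x})+\beta^\star m_2(\mathbf{x})\ge d^\star\ge 0$ for every $\mathbf{x}$, which via the homogenization delivers the claimed LMI.

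The main obstacle is establishing the zero-duality-gap step, since generic non-convex QCQPs do admit positive duality gaps. The enabling facts here are that there is only a single quadratic constraint and that the field is complex: the joint numerical range $\{(\tilde{\mathbf{x}}^H\mathbf{M}_1\tilde{\mathbf{x}},\,\tilde{\mathbf{x}}^H\mathbf{M}_2\tilde{\mathbf{x}}):\tilde{\mathbf{x}}\in\mathbb{C}^{M+1}\}\subset\mathbb{R}^2$ is convex (a Hausdorff--Toeplitz-type result). A separating-hyperplane argument between this convex image and the open set $\{u<0,\,v\le 0\}$ then produces non-negative multipliers $(\alpha_1,\alpha_2)\neq(0,0)$ with $\alpha_1 m_1(\mathbf{x})+\alpha_2 m_2(\mathbf{x})\ge 0$ for every $\mathbf{x}$; the Slater point $\check{\mathbf{x}}$ with $m_2(\check{\mathbf{x}})<0$ rules out $\alpha_1=0$, and setting $\beta^\star=\alpha_2/\alpha_1\ge 0$ finishes the proof. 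Equivalently, one may invoke the ``Shor relaxation is tight for one-constraint complex QCQPs'' result, which the authors themselves rely on elsewhere in the paper.
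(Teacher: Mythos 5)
The paper itself offers no proof of this lemma: it is quoted from \cite{Boyd_convex} as a known result, so there is no in-paper argument to compare yours against. Judged on its own merits, your proposal follows the standard route to the complex S-lemma and is essentially sound: the homogenization identity $m_n(\mathbf{x})=\tilde{\mathbf{x}}^H\mathbf{M}_n\tilde{\mathbf{x}}$, the one-line direction $(2)\Rightarrow(1)$, the passage from the pointwise inequality $m_1+\beta^\star m_2\ge 0$ to the LMI via scaling and continuity, and the use of the Slater point to rule out a vanishing leading multiplier are all correct, as is your decision to read statement (1) as the implication $m_2(\mathbf{x})\le 0\Rightarrow m_1(\mathbf{x})\ge 0$ (which is how the paper actually applies the lemma). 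Two points need more care in a full write-up. First, $\{u<0,\ v\le 0\}$ is convex but not open; separation of two disjoint convex sets in $\mathbb{R}^2$ still holds without openness, but you then need the observation that boundedness of the separating functional on this cone-like set forces $\alpha_1,\alpha_2\ge 0$ and drives the separation level to $0$. Second, and more substantively, condition (1) only guarantees that points of the homogenized joint numerical range coming from augmented vectors with \emph{non-zero} last coordinate avoid the bad set; the range over $\mathbb{C}^{M+1}$ also contains the points $\left(\mathbf{x}^H\mathbf{Y}_1\mathbf{x},\,\mathbf{x}^H\mathbf{Y}_2\mathbf{x}\right)$ with last coordinate zero, and you must either show these also avoid $\{u<0,\ v\le 0\}$ (a limiting argument along scaled rays, exploiting the phase freedom of the complex field) or instead invoke the Fradkov--Yakubovich fact that the image of $\mathbb{C}^{M}$ under the two \emph{inhomogeneous} Hermitian quadratics is itself convex and separate that set directly. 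Both fixes are routine, so this is an imprecision to be repaired rather than a fatal gap.
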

Exploiting Lemma 1 and relaxing the rank-one constraints on $\mathbf{W}_{i}$ and $\mathbf{V}_{t}$, one can transform optimization problem \eqref{prob_sdp2} into the standard convex SDP form given in \eqref{prob_sdp3} at the top of next page
\begin{figure*}
\begin{equation}
\begin{aligned}\label{prob_sdp3}
& \displaystyle \min_{\{\mathbf{W}_{i}\},\{\mathbf{V}_t\}\in \mathbb{H}^{M \times M},\alpha_i,\lambda_i^{(t)},\beta_t} & & \textrm{Tr}\left(\sum_{i=1}^U\mathbf{W}_{i}
+ \sum_{t=1}^N\mathbf{V}_{t}\right)\\
& \text{s.\ t.}\ & & \begin{bmatrix} \mathbf{H}_i^{1/2}\mathbf{A}_i\mathbf{H}_i^{1/2}+\alpha_i\mathbf{I}_M&\mathbf{H}_i^{1/2}\mathbf{A}_i\tilde{\mathbf{h}}_i\\\ \tilde{\mathbf{h}}_i^H\mathbf{A}_i\mathbf{H}_i^{1/2}&\tilde{\mathbf{h}}_i^H\mathbf{A}_i\tilde{\mathbf{h}}_i-
\sigma^2_{I,i}-\alpha_i
\frac{\mathfrak{T}_m\left( 1-\rho_i\right)}{2}\end{bmatrix}\succeq \mathbf{0},\ \alpha_i\geq0, \ \forall i,
\\ & & &
 \begin{bmatrix} \mathbf{G}_t^{1/2}\mathbf{B}_i\mathbf{G}_t^{1/2}+\lambda_i^{(t)}\mathbf{I}_M&\mathbf{G}_t^{1/2}\mathbf{B}_i\tilde{\mathbf{g}}_t\\\ \tilde{\mathbf{g}}_t^H\mathbf{B}_i\mathbf{G}_t^{1/2}&\tilde{\mathbf{g}}_t^H\mathbf{B}_i\tilde{\mathbf{g}}_t+
 \sigma^2_{E,t}-\lambda_i^{(t)}\frac{\mathfrak{T}_m\left( 1-\rho_i^{(t)}\right)}{2}\end{bmatrix}\succeq \mathbf{0}, \ \lambda_i^{(t)}\geq 0, \ \forall i, \forall t,
\\ &&& \begin{bmatrix} \mathbf{G}_t^{1/2}\mathbf{C}\mathbf{G}_t^{1/2}+\beta_t\mathbf{I}_M&\mathbf{G}_t^{1/2}\mathbf{C}\tilde{\mathbf{g}}_t\\\ \tilde{\mathbf{g}}_t^H\mathbf{C}\mathbf{G}_t^{1/2}&\tilde{\mathbf{g}}_t^H\mathbf{C}\tilde{\mathbf{g}}_t-
P_t-\beta_t
\frac{\mathfrak{T}_m\left( 1-\varrho_t\right)}{2}\end{bmatrix}\succeq \mathbf{0},\ \beta_t\geq 0, \ \forall t,
 \\ &&& \mathbf{W}_{i}\succeq \mathbf{0},\ \forall i, \ \mathbf{V}_{t}\succeq \mathbf{0},\  \forall t,
\end{aligned}
\end{equation}
\hrulefill
\end{figure*}
where $\alpha_i,\lambda_i^{(t)}$, and $\beta_t$ are auxiliary optimization variables.

{\it Remark 2:} The LMI constraints in \eqref{prob_sdp3} are similar to those of the norm-bounded approaches in e.g., \cite{WangJun2015,KaiKitJan2016}. However, the main difference between these two approaches is that the radii, i.e., the norms of the error  vectors, are predefined values in the norm-bounded approaches whereas they can be controlled  via the maximum tolerable outage probabilities in the proposed scheme.

To arrive at \eqref{prob_sdp3}, we have relaxed the rank-one constraints on the beamforming matrices $\mathbf{W}_{i}$ and $\mathbf{V}_{t}$. In the following theorem, we will show that relaxing the rank-one constraints does not affect the optimality of the solutions.

\begin{theorem}\label{theo1}
If problem \eqref{prob_sdp3} is feasible,\footnote{If \eqref{prob_sdp3} is infeasible, then some mechanisms such as admission control and quality-of-service adjustment are required at the Medium Access Control layer to restore the feasibility of the optimization problem. However, algorithms ensuring the feasibility of the optimization problem require a cross-layer design which is beyond the scope of this paper.} then its optimal solution yields rank-one matrices $\mathbf{W}_{i}$ and $\mathbf{V}_{t}$.
\end{theorem}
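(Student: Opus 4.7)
The plan is to exploit Lagrangian duality. Since \eqref{prob_sdp3} is a convex SDP that is assumed feasible, one can verify Slater's condition by simultaneously inflating the beamforming matrices and the scalar slack variables $\alpha_i,\lambda_i^{(t)},\beta_t$; strong duality then holds and the KKT conditions are both necessary and sufficient at an optimum. I would associate PSD dual multipliers $\mathbf{\Phi}_i,\mathbf{\Theta}_i^{(t)},\mathbf{\Xi}_t \in \mathbb{H}^{(M+1)\times(M+1)}$ with the three LMI families in \eqref{prob_sdp3}, PSD multipliers $\mathbf{Z}_i,\mathbf{Y}_t\succeq \mathbf{0}$ with the constraints $\mathbf{W}_i\succeq \mathbf{0}$ and $\mathbf{V}_t\succeq \mathbf{0}$, and non-negative scalars with $\alpha_i,\lambda_i^{(t)},\beta_t\geq 0$.

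Next I would differentiate the Lagrangian with respect to $\mathbf{W}_i$. Partitioning each $(M+1)\times(M+1)$ multiplier into its upper-left $M\times M$ block plus off-diagonal/bottom entries, the stationarity identity rearranges schematically to
\[
\mathbf{Z}_i \;=\; \mathbf{I}_M + \mathbf{E}_i - \left(1+\tfrac{1}{\gamma_i}\right)\mathbf{H}_i^{1/2}\widehat{\mathbf{\Phi}}_i\mathbf{H}_i^{1/2},
\]
where $\widehat{\mathbf{\Phi}}_i$ is the upper-left block of $\mathbf{\Phi}_i$ and $\mathbf{E}_i\succeq\mathbf{0}$ collects the contributions from the leakage and power-transfer LMI multipliers (which enter with the opposite sign from the SINR term thanks to the structure of $\mathbf{A}_i$, $\mathbf{B}_i$, and $\mathbf{C}$). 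Complementary slackness gives $\mathbf{Z}_i\mathbf{W}_i=\mathbf{0}$. Since $\mathbf{I}_M+\mathbf{E}_i$ is positive definite, and a PSD subtraction of rank $r$ can reduce the rank by at most $r$, it suffices to show $\mathrm{rank}(\widehat{\mathbf{\Phi}}_i)\le 1$ to conclude $\mathrm{rank}(\mathbf{Z}_i)\ge M-1$ and hence $\mathrm{rank}(\mathbf{W}_i)\le 1$. A short side argument rules out $\mathbf{W}_i=\mathbf{0}$: otherwise the first LMI in \eqref{prob_sdp3} would demand $\widetilde{\mathbf{h}}_i^H(-\mathbf{C})\widetilde{\mathbf{h}}_i\ge \sigma^2_{I,i}+\alpha_i\mathfrak{T}_m(1-\rho_i)/2$, which is infeasible for $\mathbf{C}\succeq\mathbf{0}$. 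Thus $\mathrm{rank}(\mathbf{W}_i)=1$.

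The main obstacle, as I expect it, is the analogous claim for each $\mathbf{V}_t$. Writing $\partial\mathcal{L}/\partial\mathbf{V}_t=\mathbf{0}$ produces a relation coupling $\mathbf{Y}_t$ to all three LMI multiplier families through the shared matrix $\mathbf{C}$, and no clean rank-one subtraction appears on inspection. This is precisely the point where the novel SOCP-to-LMI conversion advertised in the introduction must enter: a norm (second-order-cone) inequality that naturally bounds the spectral radius of the PSD perturbation associated with $\mathbf{Y}_t$ can be recast, \emph{without shrinking the feasible set}, as an equivalent linear matrix inequality on an augmented block matrix, which can then be absorbed into the KKT system. With that reformulation in hand, the same rank-arithmetic argument as above certifies $\mathrm{rank}(\mathbf{Y}_t)\ge M-1$, and complementary slackness $\mathbf{Y}_t\mathbf{V}_t=\mathbf{0}$ yields $\mathrm{rank}(\mathbf{V}_t)\le 1$. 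Non-triviality of $\mathbf{V}_t$ (when it is used at all) is forced by the power-transfer LMI whenever the data beamformers alone cannot meet $P_t$; the boundary cases where certain multipliers $\lambda_i^{(t)}$ or $\beta_t$ are inactive must be handled separately, but they only remove terms from $\mathbf{E}_i$ and thus simplify the spectral bookkeeping rather than complicating it.
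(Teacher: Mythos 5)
Your duality set-up coincides with the paper's starting point (Slater's condition, strong duality, complementary slackness), and your side argument that $\mathbf{W}_i\neq\mathbf{0}$ is sound. However, the proof has two genuine gaps. First, the pivotal step for $\mathbf{W}_i$ --- ``it suffices to show $\mathrm{rank}(\widehat{\mathbf{\Phi}}_i)\le 1$'' --- is never carried out, and nothing in the KKT system delivers it: the multiplier of the $(M+1)\times(M+1)$ SINR LMI is a general PSD matrix whose rank is restricted only by complementary slackness with the LMI slack, and in the paper's notation the term it contributes to the stationarity condition is $\frac{1}{\gamma_i}\mathbf{E}_i\mathbf{Q}_i^{\star}\mathbf{E}_i^{H}$ with $\mathbf{E}_i=\bigl[\,\mathbf{H}_i^{1/2}\ \ \widetilde{\mathbf{h}}_i\,\bigr]$, i.e.\ it involves the \emph{entire} multiplier (including the cross terms with $\widetilde{\mathbf{h}}_i$), not only its upper-left block, and can have rank up to $M$. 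The rank-subadditivity argument therefore stalls exactly where the work is needed. Second, the tool you call in for $\mathbf{V}_t$ is not available here: the SOCP-to-LMI conversion advertised in the introduction is the engine of the proof of Theorem~2 for the Bernstein-type problem \eqref{prob_sdp_bernstein}, which genuinely contains the SOC constraints \eqref{const12new}, \eqref{const22}, \eqref{const32}; problem \eqref{prob_sdp3} consists solely of LMIs produced by the S-procedure, so there is no second-order cone to convert and nothing for that device to act on.

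The paper's own proof avoids both difficulties by never bounding the rank of any multiplier. It evaluates the Lagrangian at the optimal dual point and collects it into $\sum_{i}\textrm{Tr}\left(\mathbf{\Theta}_i\mathbf{W}_i\right)+\sum_{t}\textrm{Tr}\left(\mathbf{\Xi}_t\mathbf{V}_t\right)+\chi$, deduces $\mathbf{\Theta}_i\succeq\mathbf{0}$ and $\mathbf{\Xi}_t\succeq\mathbf{0}$ from the requirement that the dual function be bounded from below, and then argues by contradiction: if $\mathbf{W}_i^{\star}=\sum_{k=1}^{K}\nu_{i,k}\mathbf{a}_{i,k}\mathbf{a}_{i,k}^{H}$ with $K>1$, replacing it by the single eigen-component that minimizes $\nu_{i,k}\mathbf{a}_{i,k}^{H}\mathbf{\Theta}_i\mathbf{a}_{i,k}$ strictly decreases the inner Lagrangian minimization, contradicting optimality; the identical argument applied to $\mathbf{\Xi}_t$ disposes of $\mathbf{V}_t$ without any special treatment of the shared matrix $\mathbf{C}$. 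To salvage your route you would need to establish a rank-one property for $\mathbf{E}_i\mathbf{Q}_i^{\star}\mathbf{E}_i^{H}$ and for the term $\sum_{p}\mathbf{L}_p\mathbf{S}_p^{\star}\mathbf{L}_p^{H}$ entering the $\mathbf{V}_t$ condition, which the KKT conditions do not supply.
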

\begin{proof}
Please refer to Appendix \ref{apen1}.
\end{proof}
As a consequence of Theorem \ref{theo1}, the optimal beamforming vectors $\mathbf{w}_{i}^{\star}$ and $\mathbf{v}_{t}^{\star}$ are, respectively, obtained as $\mathbf{w}_{i}^{\star}=\sqrt{\lambda^{(w)}_i}\mathbf{z}^{(w)}_i$ and $\mathbf{v}_{t}^{\star}=\sqrt{\lambda^{(v)}_t}\mathbf{z}^{(v)}_t$,
where $\lambda^{(w)}_i$ and $\lambda^{(v)}_t$ are the non-zero eigenvalues and $\mathbf{z}^{(w)}_i$ and $\mathbf{z}^{(v)}_t$ are the corresponding eigenvectors of the optimal rank-one matrices $\mathbf{W}_{i}^{\star}$ and $\mathbf{V}_{t}^{\star}$, respectively. Since \eqref{prob_sdp3} is a safe approximation of \eqref{prob_sdp}, $\mathbf{W}_{i}^{\star}$ and $\mathbf{V}_{t}^{\star}$ are suboptimal solutions to \eqref{prob_sdp}. As \eqref{prob_sdp} is equivalent to \eqref{prob_02new},  $\mathbf{w}_{i}^{\star}$ and $\mathbf{v}_{t}^{\star}$ are suboptimal solutions to the original problem \eqref{prob_02new}.
\subsection{Bernstein-type-inequality Based Method}
In the previous subsection, bounded-norm conditions have been implicitly imposed on the error vectors via outage probability constraints to develop the S-procedure based method. Imposing these constraints may degrade the system performance as the feasible region of the original optimization problem is reduced.
To overcome this problem, here, we adopt a different approach to obtain another type of robust formulation, i.e., another safe convex approximation of the original problem \eqref{prob_sdp}. This convex approximation approach is based on a large deviation inequality, i.e., a Berstein-type inequality, which bounds the probability that a sum of random variables deviates from its mean \cite{Kun-Yu2014}.  To begin, let us recall the following lemma.
\begin{lemma}[Bernstein-type inequality \cite{Bernstein-type}]\label{bernstein}
Consider the following random variable $f(\mathbf{x})=\mathbf{x}^H\mathbf{Y}\mathbf{x}+2\textrm{Re}\{\mathbf{x}^H\mathbf{u}\}$, where $\mathbf{x}\sim\mathcal{CN}(\mathbf{0},\mathbf{I}_M)$, $\mathbf{Y}\in \mathbb{H}^{M\times M}$, and $\mathbf{u}\in \mathbb{C}^{M\times 1}$. For all $\delta >0$, the following statement holds:
\[\textrm{Pr}\left(f(\mathbf{x})\geq \textrm{Tr}\left( \mathbf{Y}\right)-\sqrt{2\delta}\sqrt{\|\mathbf{Y}\|_F^2+2\|\mathbf{u}\|^2}-\delta s^+(\mathbf{Y})\right)\geq 1 - e^{-\delta}.\]
\end{lemma}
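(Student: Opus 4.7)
The plan is to prove this sub-exponential concentration inequality by the standard Chernoff--Bernstein route: diagonalize $\mathbf{Y}$ to rewrite $f(\mathbf{x})$ as a sum of independent scalar contributions, compute the moment generating function (MGF) of the lower tail in closed form, bound its logarithm by a quadratic-over-linear expression in the tilting parameter, and optimize.

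First I would diagonalize. Since $\mathbf{Y}$ is Hermitian, write $\mathbf{Y} = \mathbf{U}\mathbf{\Lambda}\mathbf{U}^H$ with $\mathbf{\Lambda} = \textrm{diag}(\lambda_1, \ldots, \lambda_M)$ and $\mathbf{U}$ unitary. The change of variable $\mathbf{z} := \mathbf{U}^H\mathbf{x}$ preserves the law $\mathcal{CN}(\mathbf{0}, \mathbf{I}_M)$, and with $\mathbf{v} := \mathbf{U}^H\mathbf{u}$ (so $\|\mathbf{v}\| = \|\mathbf{u}\|$) one obtains $Z := f(\mathbf{x}) - \textrm{Tr}(\mathbf{Y}) = \sum_{k=1}^M [\lambda_k(|z_k|^2 - 1) + 2\textrm{Re}(\bar{v}_k z_k)]$, a sum of $M$ independent mean-zero random variables, each depending only on the scalar $z_k \sim \mathcal{CN}(0,1)$.

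Next, for any $t > 0$ with $1 + t\lambda_k > 0$ for every $k$, the exponential Markov inequality gives $\Pr(Z < -\tau) \leq e^{-t\tau}\,\mathbb{E}[e^{-tZ}]$. Writing $z_k = a + ib$ with $a,b \sim \mathcal{N}(0, \tfrac{1}{2})$ independent and completing the square in the resulting Gaussian integral yields the per-coordinate MGF $\mathbb{E}\bigl[e^{-t[\lambda_k(|z_k|^2-1)+2\textrm{Re}(\bar{v}_k z_k)]}\bigr] = \frac{e^{t\lambda_k}}{1+t\lambda_k}\exp\!\bigl(\tfrac{t^2|v_k|^2}{1+t\lambda_k}\bigr)$. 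Taking logs and summing yields the closed form $\log\mathbb{E}[e^{-tZ}] = \sum_k[t\lambda_k - \log(1+t\lambda_k)] + \sum_k \tfrac{t^2|v_k|^2}{1+t\lambda_k}$.

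I would then bound this cumulant generating function using the elementary inequality $x - \log(1+x) \leq \frac{x^2/2}{1-\max\{-x,0\}}$ for $x > -1$ (obtained by comparing the Taylor series of the left side term-wise to a geometric series) together with the pointwise bound $(1+t\lambda_k)^{-1} \leq (1-t\eta)^{-1}$, where $\eta$ is a single scalar equal to $s^+(\mathbf{Y})$ after aligning conventions with the cited reference. Combining the two estimates gives $\log\mathbb{E}[e^{-tZ}] \leq \frac{t^2(\|\mathbf{Y}\|_F^2 + 2\|\mathbf{u}\|^2)}{2(1-t\eta)}$. Setting $T := \|\mathbf{Y}\|_F^2 + 2\|\mathbf{u}\|^2$, $\tau := \sqrt{2\delta T} + \delta\eta$, and $t^{\star} := 1/(\eta + \sqrt{T/(2\delta)})$, a direct algebraic check shows $-t^{\star}\tau + \frac{(t^{\star})^2 T}{2(1-t^{\star}\eta)} = -\delta$ exactly, which establishes the stated bound $\Pr(Z < -\tau) \leq e^{-\delta}$.

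The main obstacle will be the cumulant bound and the calibration of $t^{\star}$: collapsing the per-eigenvalue terms $t\lambda_k - \log(1+t\lambda_k)$ into a single sum governed by $\|\mathbf{Y}\|_F^2$ and the scalar $\eta$ (rather than by the individual $\lambda_k$'s) forces a worst-case estimate in the denominator, and arranging the constants so that the optimized exponent lands exactly at $-\delta$ while producing the clean pair $\sqrt{2\delta T}$ and $\delta s^+(\mathbf{Y})$ is the bookkeeping challenge.
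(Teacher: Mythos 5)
Your proposal is correct, but note that the paper does not prove this lemma at all: it is imported verbatim from the cited reference (Bechar, and in the form used here, Lemma 1 of the Wang--So--Chang--Ma--Chi paper the authors also cite), so your Chernoff-type argument supplies a self-contained proof of something the paper outsources. I verified the key steps: the unitary change of variables giving $Z=\sum_{k}\bigl[\lambda_k(|z_k|^2-1)+2\textrm{Re}(\bar v_k z_k)\bigr]$ with i.i.d.\ $z_k\sim\mathcal{CN}(0,1)$; the per-coordinate MGF $\frac{e^{t\lambda_k}}{1+t\lambda_k}\exp\bigl(\frac{t^2|v_k|^2}{1+t\lambda_k}\bigr)$ (a standard complex-Gaussian square completion, valid for $1+t\lambda_k>0$); the inequality $x-\log(1+x)\leq\frac{x^2/2}{1-\max\{-x,0\}}$ for $x>-1$ (the term-wise comparison $\sum_{n\geq2}y^n/n\leq\sum_{n\geq2}y^n/2$ for $0<y<1$ works exactly as you say, and the $x\geq0$ case is the usual $x-\log(1+x)\leq x^2/2$); and the calibration, where with $1-t^{\star}\eta=t^{\star}\sqrt{T/(2\delta)}$ one gets $-t^{\star}\tau+\frac{(t^{\star})^2T}{2(1-t^{\star}\eta)}=-\frac{t^{\star}}{2}\sqrt{2\delta T}-t^{\star}\delta\eta=-\delta$ identically, so the exponent lands exactly where claimed.

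One point deserves to be made explicit rather than hidden in your phrase ``after aligning conventions'': your proof forces $\eta\geq\max_k(-\lambda_k)$, i.e.\ $\eta=\max\{\textrm{Eig}_{\textrm{max}}(-\mathbf{Y}),0\}$, which is the definition of $s^+$ in the cited reference. The present paper's notation list instead defines $s^+(\mathbf{Y})=\max\{\textrm{Eig}_{\textrm{max}}(\mathbf{Y}),0\}$, and with that literal reading the lemma is false in general (e.g.\ for $\mathbf{Y}\preceq\mathbf{0}$ the needed constant is $-\lambda_{\min}(\mathbf{Y})>0$ while the paper's would be $0$, raising the threshold and breaking the bound); this matters in principle since the matrices $\mathbf{A}_i$, $\mathbf{B}_i$ to which the lemma is applied are indefinite. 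So your argument proves the correct (cited) statement, and in doing so exposes a sign slip in the paper's notation; stating $\eta=\max\{\textrm{Eig}_{\textrm{max}}(-\mathbf{Y}),0\}$ outright would make the write-up airtight.
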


With $\delta_i=-\ln{\rho_i}$ and Lemma \ref{bernstein}, the SINR outage constraint $\textrm{Pr}\left( f_i(\mathbf{e}_i)\geq 0\right)\geq 1-\rho_i$ in \eqref{prob_sdp} can be rewritten as
\begin{eqnarray}
&{}&\textrm{Tr}\left(\mathbf{H}_i^{1/2}\mathbf{A}_i\mathbf{H}_i^{1/2}\right)-\sqrt{2\delta_i}
\sqrt{\|\mathbf{H}_i^{1/2}\mathbf{A}_i\mathbf{H}_i^{1/2}\|_F^2+2\|\mathbf{H}_i^{1/2}\mathbf{A}_i
\widetilde{\mathbf{h}}_i\|^2}  \nonumber \\ &{}&- \delta_i s^+\left(\mathbf{H}_i^{1/2}\mathbf{A}_i\mathbf{H}_i^{1/2}\right)\geq \sigma^2_{I,i}-
\widetilde{\mathbf{h}}_i^H\mathbf{A}_i\widetilde{\mathbf{h}}_i. \label{bernstein1}
\end{eqnarray}
Then, by introducing two auxiliary optimization variables $\theta_i$ and $\vartheta_i$,  \eqref{bernstein1} is further recast as the following equivalent constraint:
\begin{eqnarray}
\textrm{Tr}\left(\mathbf{H}_i^{1/2}\mathbf{A}_i\mathbf{H}_i^{1/2}\right)-\sqrt{2\delta_i}\theta_i-\delta_i \vartheta_i\geq \sigma^2_{I,i}-
\widetilde{\mathbf{h}}_i^H\mathbf{A}_i\widetilde{\mathbf{h}}_i,\label{const11}\\
\sqrt{\|\mathbf{H}_i^{1/2}\mathbf{A}_i\mathbf{H}_i^{1/2}\|_F^2+2\|\mathbf{H}_i^{1/2}\mathbf{A}_i
\widetilde{\mathbf{h}}_i\|^2} \leq \theta_i,\label{const12}\\
\vartheta_i\mathbf{I}_M+\mathbf{H}_i^{1/2}\mathbf{A}_i\mathbf{H}_i^{1/2}  \succeq \mathbf{0},\label{const13}\\
\vartheta_i\geq 0\label{const14}.
\end{eqnarray}
Note that \eqref{const12} can be equivalently written as a second-order-cone (SOC) constraint
\begin{eqnarray}
\left \Vert\begin{bmatrix}
\sqrt{2}\mathbf{H}_i^{1/2}\mathbf{A}_i
\widetilde{\mathbf{h}}_i \\
\textrm{vec}\left( \mathbf{H}_i^{1/2}\mathbf{A}_i\mathbf{H}_i^{1/2}\right)
\end{bmatrix}
\right \Vert\leq \theta_i.\label{const12new}
\end{eqnarray}
Similarly, by setting $\delta^{(t)}_i=-\ln{\rho^{(t)}_i}$, and introducing two auxiliary optimization variables $\theta_i^{(t)}$ and $\vartheta_i^{(t)}$, the leakage outage constraint of the SINR, $\textrm{Pr}\left( k^{(t)}_i(\mathbf{r}_{t})\geq 0\right)\geq 1-\rho^{(t)}_i$, in \eqref{prob_sdp} can be safely approximated by the following constraints:
\begin{equation}
\textrm{Tr}\left(\mathbf{G}_t^{1/2}\mathbf{B}_i\mathbf{G}_t^{1/2}\right)\!-\sqrt{2\delta_i^{(t)}}\theta_i^{(t)}
-\delta_i^{(t)} \vartheta_i^{(t)} \geq -\sigma^2_{E,t}-
\widetilde{\mathbf{g}}_t^H\mathbf{B}_i\widetilde{\mathbf{g}}_t ,\label{const21}
\end{equation}
\begin{eqnarray}
\left \Vert\begin{bmatrix}
\sqrt{2}\mathbf{G}_t^{1/2}\mathbf{B}_i
\widetilde{\mathbf{g}}_t \\
\textrm{vec}\left( \mathbf{G}_t^{1/2}\mathbf{B}_i\mathbf{G}_t^{1/2}\right)
\end{bmatrix}
\right \Vert\leq \theta_i^{(t)} ,\label{const22}\\
\vartheta_i^{(t)}\mathbf{I}_M+\mathbf{G}_t^{1/2}\mathbf{B}_i\mathbf{G}_t^{1/2}  \succeq \mathbf{0} ,\label{const23}\\
\vartheta_i^{(t)}\geq 0\label{const24}.
\end{eqnarray}
Using the same approach, the power-transfer outage constraint, $\textrm{Pr}\left( d_t(\mathbf{r}_{t})\geq 0 \right)\geq 1- \varrho_t$, in \eqref{prob_sdp} is safely approximated by:
\begin{eqnarray}
\textrm{Tr}\left(\mathbf{G}_t^{1/2}\mathbf{C}\mathbf{G}_t^{1/2}\right)-\sqrt{2\mu_t}a_t
-\mu_t b_t\geq P_t-
\widetilde{\mathbf{g}}_t^H\mathbf{C}\widetilde{\mathbf{g}}_t,\label{const31}\\
\left \Vert\begin{bmatrix}
\sqrt{2}\mathbf{G}_t^{1/2}\mathbf{C}
\widetilde{\mathbf{g}}_t \\
\textrm{vec}\left( \mathbf{G}_t^{1/2}\mathbf{C}\mathbf{G}_t^{1/2}\right)
\end{bmatrix}
\right \Vert\leq a_t
,\label{const32}\\
b_t\mathbf{I}_M+\mathbf{G}_t^{1/2}\mathbf{C}\mathbf{G}_t^{1/2}  \succeq \mathbf{0},\label{const33}\\
b_t\geq 0,\label{const34}
\end{eqnarray}
where $a_t$ and $b_t$ are auxiliary optimization variables and $\mu_t=-\ln{\varrho_t}$. Therefore, the relaxed rank-one-constraint problem of the safe approximation of problem \eqref{prob_sdp} can be written as

\begin{equation}
\begin{aligned}\label{prob_sdp_bernstein}
& \displaystyle \min_{\{\mathbf{W}_{i}\},\{\mathbf{V}_t\}\in \mathbb{H}^{M\times M},\theta_i,\vartheta_i,\theta^{(t)}_i,\vartheta^{(t)}_i,a_t,b_t} & & \textrm{Tr}\left(\sum_{i=1}^U\mathbf{W}_{i}
+ \sum_{t=1}^N\mathbf{V}_{t}\right)\\
& \text{s.\ t.}\ & &\eqref{const11}, \eqref{const13}, \eqref{const14}, \eqref{const12new},\ \forall i,
\\ &&&
\eqref{const21}, \eqref{const22},\eqref{const23}, \eqref{const24}, \ \forall i, \forall t, \\ &&&
 \eqref{const31}, \eqref{const32},\eqref{const33}, \eqref{const34}, \forall t,
 \\ &&& \mathbf{W}_{i}\succeq \mathbf{0},\ \forall i, \ \mathbf{V}_{t}\succeq \mathbf{0},\ \forall t.
\end{aligned}
\end{equation}

The SOC constraints in \eqref{const12new}, \eqref{const22}, and \eqref{const32} are sub-cases of an SDP constraint \cite{Ami} since any SOC constraint can be recast in an LMI form using the Schur complement \cite{Boyd_convex}. Hence, the optimization problem in \eqref{prob_sdp_bernstein} is convex. Although the rank-one constraints on the beamforming matrices $\mathbf{W}_{i}$ and $\mathbf{V}_{t}$ have been relaxed to arrive at  \eqref{prob_sdp_bernstein}, the following theorem reveals that the relaxation preserves the optimality of the solution to the non-rank-one-relaxed problem , i.e., \eqref{prob_sdp_bernstein} with rank-one constraints on $\mathbf{W}_{i}$ and $\mathbf{V}_{t}$.

\begin{theorem}\label{theo2}
For the optimal solution of \eqref{prob_sdp_bernstein}, matrices $\mathbf{W}_{i}$ and $\mathbf{V}_{t}$ are always rank-one if the problem is feasible.
\end{theorem}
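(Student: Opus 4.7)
The plan is to mirror the KKT strategy used for Theorem 1. The two new elements in \eqref{prob_sdp_bernstein} relative to \eqref{prob_sdp3}, namely the SOC constraints \eqref{const12new}, \eqref{const22}, \eqref{const32} and their auxiliary variables, are first absorbed into the SDP framework by invoking the novel SOC-to-LMI reformulation highlighted among the paper's contributions. Specifically, each SOC constraint of the form $\|\mathbf{q}\|\le\theta$ is rewritten as the Schur-complement LMI
\[
\begin{bmatrix}\theta\,\mathbf{I} & \mathbf{q}\\ \mathbf{q}^H & \theta\end{bmatrix}\succeq\mathbf{0},
\]
which is an exact, not a shrinking, reformulation, in contrast with the transformations used in \cite{zheng2016,KhandakerSep2016}. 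After this step, \eqref{prob_sdp_bernstein} becomes a standard SDP whose feasible set coincides with that of the original problem.

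Next, I would form the Lagrangian, attaching Hermitian PSD multipliers $\mathbf{X}_i$ and $\mathbf{U}_t$ to the constraints $\mathbf{W}_i\succeq\mathbf{0}$ and $\mathbf{V}_t\succeq\mathbf{0}$, PSD block multipliers to the LMIs \eqref{const13}, \eqref{const23}, \eqref{const33} and to the Schur-complement LMIs produced in Step 1, and nonnegative scalar multipliers to the linear inequalities \eqref{const11}, \eqref{const21}, \eqref{const31} and \eqref{const14}, \eqref{const24}, \eqref{const34}. Setting $\partial L/\partial\mathbf{W}_i=\mathbf{0}$ is expected to yield an expression of the form
\[
\mathbf{X}_i \;=\; \mathbf{I}_M + \mathbf{P}_i - \kappa_i\,\mathbf{T}_i,
\]
where $\mathbf{P}_i\succeq\mathbf{0}$ bundles the contributions from every leakage and power LMI together with their SOC companions, $\kappa_i\ge 0$ is the scalar multiplier of \eqref{const11}, and $\mathbf{T}_i$ is a single rank-one PSD matrix inherited from the SINR LMI. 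Because $\mathbf{I}_M + \mathbf{P}_i\succ\mathbf{0}$, subtracting a rank-one matrix can lower the rank by at most one, hence $\mathrm{rank}(\mathbf{X}_i)\ge M-1$. Complementary slackness $\mathbf{X}_i\mathbf{W}_i=\mathbf{0}$ together with $\mathbf{W}_i\neq \mathbf{0}$, which must hold because $\mathbf{W}_i=\mathbf{0}$ makes the $i$-th SINR outage constraint infeasible, then yields $\mathrm{rank}(\mathbf{W}_i)=1$. The same recipe applied to $\partial L/\partial\mathbf{V}_t=\mathbf{0}$, with the identity term supplied by $\mathrm{Tr}(\mathbf{V}_t)$ in the objective and the rank-one deficit supplied by the corresponding LMI, delivers $\mathrm{rank}(\mathbf{V}_t)\le 1$.

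The hard part is the sign bookkeeping in the stationarity equation. The matrix $\mathbf{W}_i$ appears inside $\mathbf{A}_i$, inside every $\mathbf{B}_j$ through $\mathbf{C}$, and inside $\mathbf{C}$ itself in the power constraint, so every LMI multiplier contributes to $\partial L/\partial\mathbf{W}_i$; moreover, the Schur-complement multipliers couple to $\mathbf{W}_i$ through blocks involving both $\mathbf{H}_i^{1/2}\mathbf{A}_i\widetilde{\mathbf{h}}_i$ and $\mathbf{H}_i^{1/2}\mathbf{A}_i\mathbf{H}_i^{1/2}$. Verifying that, once all these contributions are collected, everything except the SINR piece is PSD, and that the SINR contribution is exactly rank-one, is the technical core of the proof; a symmetric issue arises for $\mathbf{V}_t$, where the roles of the SINR LMI and the power-transfer LMI are exchanged in supplying the rank-one and positive-definite components.
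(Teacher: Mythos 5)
Your overall scaffolding (Schur-complement conversion of the SOC constraints \eqref{const12new}, \eqref{const22}, \eqref{const32} into LMIs, then a Lagrange-duality argument) matches the paper's strategy, and your observation that the reformulation is exact rather than shrinking is the right point to stress. However, the step that actually delivers rank-one in your write-up is broken. You claim the dual slack matrix has the form $\mathbf{X}_i=\mathbf{I}_M+\mathbf{P}_i-\kappa_i\mathbf{T}_i$ with $\mathbf{P}_i\succeq\mathbf{0}$ collecting the leakage and power contributions and a single \emph{rank-one} term $\kappa_i\mathbf{T}_i$ coming from the SINR constraint, so that $\mathrm{rank}(\mathbf{X}_i)\ge M-1$. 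Neither half of this is true here. First, the signs are reversed for part of what you put into $\mathbf{P}_i$: since $\mathbf{W}_i$ enters $\mathbf{C}$ with coefficient $+1$ and the power-transfer constraints are of the form ``$\ge$'', their multipliers contribute \emph{negative}-semidefinite terms to $\mathbf{X}_i$ (it is the leakage constraints for user $i$, where $\mathbf{W}_i$ enters $\mathbf{B}_i$ with coefficient $-1/\gamma_i^{(t)}$, that contribute positively). Second, and more fatally, the subtracted SINR-related terms are not rank-one: constraint \eqref{const11} contributes $-\frac{\kappa_i}{\gamma_i}\bigl(\mathbf{H}_i+\widetilde{\mathbf{h}}_i\widetilde{\mathbf{h}}_i^H\bigr)$, constraint \eqref{const13} contributes $-\frac{1}{\gamma_i}\mathbf{H}_i^{1/2}\mathbf{M}_i\mathbf{H}_i^{1/2}$ for a PSD block multiplier $\mathbf{M}_i$, and the Schur-complement LMI contributes further matrix-valued terms; for a generic covariance $\mathbf{H}_i$ these have rank up to $M$. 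Hence the bound $\mathrm{rank}(\mathbf{X}_i)\ge M-1$ does not follow, and the complementary-slackness rank count collapses. You flagged the sign bookkeeping as ``the technical core'' but then asserted the conclusion it was supposed to establish.

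The paper avoids this by not using a rank count at all: after converting the SOC constraints to LMIs it further decomposes each block matrix into an explicit sum of \emph{linear} maps of $\mathbf{A}_i$, $\mathbf{B}_i$, $\mathbf{C}$ (the $\mathbf{C}(\theta_i)+\mathbf{D}_i+\mathbf{D}_i^H+\mathbf{E}_i+\mathbf{E}_i^H$ decomposition with the matrices $\mathbf{T}_i(\cdot)$, $\mathbf{P}_i(\cdot)$, $\mathbf{U}_p$, $\mathbf{K}_p$), writes the Lagrangian at the dual optimum as $\sum_i\mathrm{Tr}(\overline{\Theta}_i\mathbf{W}_i)+\sum_t\mathrm{Tr}(\overline{\Xi}_t\mathbf{V}_t)+\mathrm{const}$ with $\overline{\Theta}_i,\overline{\Xi}_t\succeq\mathbf{0}$ forced by boundedness of the dual function, and then argues by contradiction: a rank-$K>1$ optimizer could be replaced by the single eigencomponent minimizing $\nu_{i,k}\mathbf{a}_{i,k}^H\overline{\Theta}_i\mathbf{a}_{i,k}$, reducing the objective. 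To repair your proof you would either have to adopt that contradiction argument or find a genuinely rank-one decomposition of the subtracted part of $\mathbf{X}_i$, which the structure of this problem does not provide.
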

\begin{proof}
Please refer to Apendix \ref{apen2}.
\end{proof}
Hence, we can obtain beamforming vectors $\mathbf{w}_{i}^{\star}$ and $\mathbf{v}_{t}^{\star}$ from $\mathbf{W}_{i}^{\star}$ and $\mathbf{V}_{t}^{\star}$, respectively, using the same technique as in Section \ref{bound_sec} as robust suboptimal solutions to the original problem \eqref{prob_02new}.
\subsection{Complexity Analysis}
Hereafter, we refer to the S-procedure based method and the Bernstein-type-inequality based method as Method I and Method II, respectively. Since Method I, i.e., problem \eqref{prob_sdp3}, and Method II, i.e., problem \eqref{prob_sdp_bernstein}, contain LMI and SOC constraints, a standard IPM \cite{Boyd_convex,Ben-tal_13} can be used to find their optimal solutions. To that end, we consider the worst-case runtime of the IPM to analyze the computational complexities of the proposed methods as follows.

{\it Definition 1:} For a given $\epsilon >0$, the set of $\{\mathbf{W}_{i}^{\epsilon}\}$ and $\{\mathbf{V}_t^{\epsilon}\}$ is called an $\epsilon$-solution to problem \eqref{prob_sdp3} or \eqref{prob_sdp_bernstein} if
\begin{equation}
\textrm{Tr}\left(\sum_{i=1}^U\mathbf{W}_{i}^{\epsilon}
+ \sum_{t=1}^N\mathbf{V}_{t}^{\epsilon}\right) \leq \displaystyle \min_{\{\mathbf{W}_{i}\},\{\mathbf{V}_t\}\in \mathbb{H}^{M \times M}} \textrm{Tr}\left(\sum_{i=1}^U\mathbf{W}_{i}
+ \sum_{t=1}^N\mathbf{V}_{t}\right)+\epsilon.
\end{equation}
It can be observed that the number of decision variables of problems \eqref{prob_sdp3} and \eqref{prob_sdp_bernstein} is on the order of $(U+N) M^2$. Let $\zeta=U+N$, $\eta=\zeta+UN$, and $n=\mathcal{O}\left( \zeta M^2\right)$. We introduce the following lemma.
\begin{lemma}
\label{comlemma}
The computational complexities to obtain $\epsilon$-solutions to problems \eqref{prob_sdp3} and \eqref{prob_sdp_bernstein} are respectively
\begin{eqnarray}
\label{complexAI}
\ln(\epsilon^{-1})\sqrt{\eta(M+1)+\zeta M} \left[\eta(M+1)^2(M+1+n)\right.\nonumber\\
\left.+\zeta M^2(M+n)+n^2\right] n
\end{eqnarray}
and
\begin{eqnarray}\label{complexAII}
\ln(\epsilon^{-1})\sqrt{(\zeta+\eta)M+4\eta} \left[\eta\left((M^2+M+1)^2+2n+2\right)\right.\nonumber\\
\left.+(\zeta+\eta)M^2(M+n)+n^2\right] n.
\end{eqnarray}
\end{lemma}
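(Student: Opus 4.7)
The plan is to invoke the standard worst-case complexity bound for primal--dual interior-point methods applied to mixed SDP/SOCP conic programs, as tabulated in \cite{Ben-tal_13}. That template says that for a conic program in $n$ real decision variables whose feasible set is described by LMI constraints of sizes $k_1,\ldots,k_J$ and second-order-cone constraints of dimensions $q_1,\ldots,q_P$, an $\epsilon$-optimal solution is produced in
\[ \mathcal{O}\bigl(\ln(1/\epsilon)\bigr)\,\sqrt{\textstyle\sum_j k_j + 2P}\,\bigl[\textstyle\sum_j(k_j^3 + k_j^2 n) + \sum_i(q_i^2 + q_i n) + n^2\bigr]\,n \]
arithmetic operations. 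All that remains is therefore to read off the block structure of \eqref{prob_sdp3} and \eqref{prob_sdp_bernstein}, substitute, and absorb lower-order terms.

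For Method I, I would catalogue the blocks of \eqref{prob_sdp3} as follows: $\eta = U + UN + N$ S-procedure LMIs of size $M{+}1$ (one per SINR, leakage-SINR, and power-transfer outage constraint), the $\zeta = U+N$ semidefiniteness constraints $\mathbf{W}_i,\mathbf{V}_t \succeq \mathbf{0}$ of size $M$, and $\eta$ scalar nonnegativity constraints on the multipliers $\alpha_i,\lambda_i^{(t)},\beta_t$. Summing block sizes yields barrier parameter $\eta(M+1)+\zeta M+\eta$; the trailing $\eta$ is absorbed as a lower-order term, producing $\sqrt{\eta(M+1)+\zeta M}$. The Schur-complement assembly then contributes $\eta(M+1)^2(M+1+n)$ from the $(M{+}1)$-sized LMIs and $\zeta M^2(M+n)$ from the $M$-sized LMIs; adding the $n^2$ back-solve and the outer factor of $n$ reproduces \eqref{complexAI}.

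For Method II, I would first put each SOC inequality \eqref{const12new}, \eqref{const22}, \eqref{const32} into canonical form $\|\mathbf{u}\|\le t$ with $\mathbf{u}$ the concatenation of $\sqrt{2}\,(\cdot)\widetilde{\mathbf{h}}_i\in\mathbb{C}^M$ and $\textrm{vec}(\cdot)\in\mathbb{C}^{M^2}$, so each such cone lives in $\mathbb{R}^{M^2+M+1}$. The three outage families then contribute, per block, one scalar linear inequality, one SOC of dimension $q = M^2{+}M{+}1$ (barrier parameter $2$), one size-$M$ LMI of the form $\vartheta\mathbf{I}_M+(\cdot)\succeq\mathbf{0}$, and one scalar nonnegativity (barrier parameter $1$). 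Together with the $\zeta$ size-$M$ LMIs on $\mathbf{W}_i,\mathbf{V}_t$, the total barrier parameter is $(\zeta+\eta)M + 2\eta + 2\eta = (\zeta+\eta)M+4\eta$, matching the square-root in \eqref{complexAII}. On the per-iteration side, each SOC contributes $q^2$ to the Schur assembly, each block's scalars add $\mathcal{O}(n)$, the $(\zeta+\eta)$ size-$M$ LMIs contribute $(\zeta+\eta)M^2(M+n)$, and the $n^2$ back-solve remains; collecting these terms yields \eqref{complexAII}.

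The main obstacle is bookkeeping rather than conceptual: correctly reading off the SOC dimension $M^2+M+1$ produced by the vectorization in \eqref{const12new}, \eqref{const22}, \eqref{const32}, identifying which contributions are lower-order, and confirming that the decision-variable count $n = \mathcal{O}(\zeta M^2)$, which is dominated by the Hermitian matrices $\mathbf{W}_i$ and $\mathbf{V}_t$, swamps the auxiliary scalar variables so that the leading-order expression is unaffected. Once the block sizes have been tabulated, both bounds follow by direct substitution into the template of \cite{Ben-tal_13}.
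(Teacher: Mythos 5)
Your proposal is correct and follows essentially the same route as the paper: the proof there is likewise a bookkeeping exercise that catalogues the constraint blocks ($\eta$ LMIs of size $M{+}1$ plus $\zeta$ of size $M$ for \eqref{prob_sdp3}; $2\eta$ size-one LMIs, $(\zeta+\eta)$ size-$M$ LMIs, and $\eta$ SOCs of dimension $M^2{+}M{+}1$ for \eqref{prob_sdp_bernstein}) and substitutes into the interior-point complexity template of \cite[Section V-A]{Kun-Yu2014}, exactly as you do. The only blemish is that the template you state includes a per-cone term $q_i n$ in the per-iteration cost, which with $q_i=M^2{+}M{+}1$ and $n=\mathcal{O}(\zeta M^2)$ would \emph{not} be lower order and would alter \eqref{complexAII}; the correct template (and your actual substitution) charges only $q_i^2$ per second-order cone, so your final expressions agree with the paper's.
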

\begin{proof}
Due to space limitation, we only provide a sketch of the proof. First, Method I, i.e., problem \eqref{prob_sdp3}, has $\eta$ LMI constraints of size $(M+1)$, and $\zeta$ LMI constraints of size $M$. Second, Method II, i.e., problem \eqref{prob_sdp_bernstein}, has $2\eta$ LMI constraints of size 1, $(\zeta+\eta)$ LMI constraints of size $M$, and $\eta$ SOC constraints of dimension $(M^2+M+1)$. Based on these observations, one can follow the same steps as in \cite[Section V-A]{Kun-Yu2014} to arrive at \eqref{complexAI} and \eqref{complexAII}. Note that the terms $\ln(\epsilon^{-1})\sqrt{\eta(M+1)+\zeta M}$ and $\ln(\epsilon^{-1})\sqrt{(\zeta+\eta)M+4\eta}$ in \eqref{complexAI} and \eqref{complexAII} are the iteration complexities \cite{Kun-Yu2014} required for obtaining $\epsilon$-solutions to problems \eqref{prob_sdp3} and \eqref{prob_sdp_bernstein}, respectively, while the remaining terms represent the per-iteration computation costs \cite{Kun-Yu2014}.
\end{proof}

It can be observed from Lemma \ref{comlemma} that the computational complexity of Method I is lower than that of Method II.\footnote{Consider a simple example when $M$ is large, $U=N=\frac{M}{2}$, and $n=\zeta M^2=(U+N)M^2=M^3$. The dominating terms in the complexities of Method I and Method II are $\frac{\ln(\epsilon^{-1})}{8}\sqrt{4M+9M^2+M^3}M^{10}$ and $\frac{\ln(\epsilon^{-1})}{8}\sqrt{16M+12M^2+M^3}M^{10}$, respectively.} In the following, we analyze the performance of the proposed methods in terms of power consumption.
\section{Simulation Results}
We evaluate the performance of the two proposed methods and compare them against the probabilistic-constraint-based scheme introduced in \cite{Feng15}, which is considered as the baseline scheme, the norm-bounded approach proposed in \cite{HZhang16}, and a benchmark scheme in \cite{TuanCL2015}. For the baseline scheme, the data transmission and power transfer reliabilities are guaranteed. However, secure data transmission is not considered. The SeDuMi provided by the CVX optimization package \cite{Boyd} is employed to obtain the sets of the optimal beamforming matrices $\mathbf{W}_{i}^{\star}$ and $\mathbf{V}_{t}^{\star}$.
\subsection{Simulation Setup}
Consider a transmitter supporting two IRs and two ERs, i.e., $U=N=2$. The estimated channel vectors $\widetilde{\mathbf{h}}_{i}$ and $\widetilde{\mathbf{g}}_{t}$ are respectively modelled as:
$\widetilde{\mathbf{h}}_{i}=H_i(l^{(I)}_i){\mathbf{h}}_{i,w}$ and
$\widetilde{\mathbf{g}}_{t}=G_t(l^{(E)}_t){\mathbf{g}}_{t,w}$,
where ${\mathbf{h}}_{i,w}\sim\mathcal{CN}(\mathbf{0},\mathbf{I}_M)$; ${\mathbf{g}}_{t,w}\sim\mathcal{CN}(\mathbf{0},\mathbf{I}_M)$;
$H_i(l^{(I)}_i)=\frac{c}{4\pi f_c}\left(\frac{1}{l^{(I)}_i}\right)^{\frac{\kappa}{2}}$;
$G_t(l^{(E)}_t)=\frac{c}{4\pi f_c}\left(\frac{1}{l^{(E)}_t}\right)^{\frac{\kappa}{2}}$;
$l^{(I)}_i=100$ m, $\forall i$, and $l^{(E)}_t=9$ m, $\forall t$, are the distances from the transmitter to the IRs and ERs, respectively;  $c=3\times10^8$ $\textrm{ms}^{-1}$ is the speed of light; $f_c=900$ MHz is the carrier frequency; and $\kappa=2.7$ is the path loss exponent.  The noise power at each IR and ER is assumed to be $-70$ dBm. The error covariance matrices are given as  $\mathbf{H}_i=\varepsilon \left(H_i(l^{(I)}_i)\right)^2\mathbf{I}_M$ and $\mathbf{G}_t=\varepsilon \left(G_t(l^{(E)}_t)\right)^2\mathbf{I}_M$ where $\varepsilon=0.001$. Monte-Carlo simulations have been carried out based on 500 channel realizations. The ER decoding sensitivity level is set to $\gamma_i^{(t)}=-5$ dB $\forall i, \forall t$. The SINR outage, leakage-SINR outage, and power-transfer outage probabilities are set to $10\ \%$, i.e., $\rho_i=\rho_i^{(t)}=\varrho_t=0.1, \ \forall i, \forall t$. The required power level at the ERs is $P_t=-10$ dBm, $\forall t$.
\subsection{Performance Evaluation}

\begin{figure}
\centering
  \includegraphics[width=0.45\textwidth]{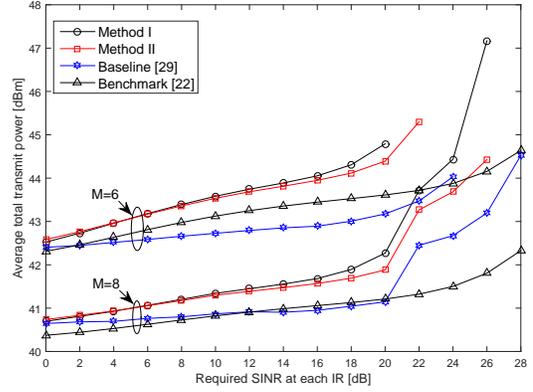}
  \caption{Average total transmit power versus required IRs' SINR for different numbers of antennas.}\vspace{-0.1 in}
  \label{Tx_SINR}
\end{figure}

\begin{figure}
  \centering
  \includegraphics[width=0.45\textwidth]{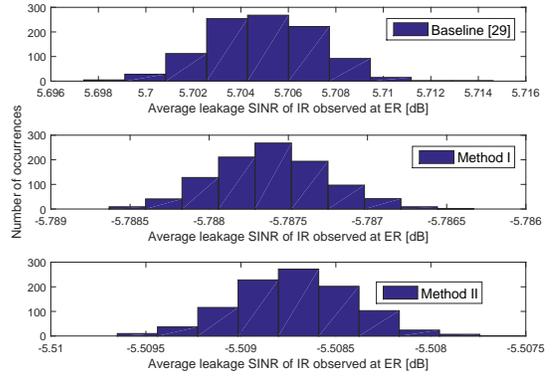}
  \caption{The histograms of the average leakage SINRs of the IR signals observed at the ERs  when the required IR SINR target is $\gamma_i=18$ dB, $\forall i$. The number of antennas is $M=6$. }\vspace{-0.1 in}
  \label{histo}
\end{figure}

Fig.~\ref{Tx_SINR} shows the average total transmit power versus the required SINR at each IR for different numbers of antennas for the proposed methods, the baseline scheme in \cite{Feng15}, and the non-robust scheme in \cite{TuanCL2015}. This figure indicates that for low-to-medium required IR SINRs, e.g. the SINR range from $0$ dB to $12$ dB, the tightenesses of the two approximations do not have much impact on their performances as the effect of the imperfect CSI is still easy to handle and hence the performances of the two proposed approaches are almost identical. However, as the IR target SINRs increase, the impact of the imperfect CSI is more severe and harder to cope with. In such situation, Method II outperforms Method I as the approximation employed in the former is tighter than that in the latter. This is due to the fact that a bounded-norm model has been implicitly imposed on the uncertainty set of the CSI for the derivation of Method I but not for Method II. The same performance trend for the two types of approximations has also been reported in \cite{Kun-Yu2014} for conventional information transmission in MISO downlink scenarios.

Fig.~\ref{Tx_SINR} reveals that the baseline scheme consumes less power, e.g. around $0.8$ dB and $0.5$ dB less at an IR target SINR of $10$ dB, than the proposed approach for 6 and 8 antennas, respectively. The price paid for this lower power consumption is that the leakage SINR cannot be controlled, i.e., secrecy cannot be guaranteed. This will be shown and discussed more in detail in Fig.~\ref{histo}. At IR target SINRs less than $6$ dB, the performances of the proposed methods are close to that of the baseline. However, as the IR target SINR increases, the performance gap between the proposed methods and the baseline widens. The reason for this is that the leakage SINR of the IRs increases with the IRs' SINR requirements. Therefore, a higher power consumption is required by the proposed approaches as more artificial noise has to be generated to impact the ERs, hence, guaranteeing the required level of security for all IRs.

Since \eqref{prob_02new} and its equivalent form \eqref{prob_sdp} are non-convex optimization problems, finding the globally optimal solution by exhaustive search is computational infeasible given the large numbers of optimization variables. Therefore, a benchmark scheme is considered in Fig.~\ref{Tx_SINR}. In particular, the benchmark scheme is based on \cite{TuanCL2015} where perfect CSI is available for resource allocation. Thus, the performance of the benchmark scheme serves as a performance upper bound for the proposed Methods I and II. It can be observed from Fig.~\ref{Tx_SINR} that in the IR target SINR range from $0$ dB to $20$ dB, the performance gap between Method II and the benchmark scheme is less than 1 dB while the gap between Method I and the benchmark scheme is around 1 dB. This underlines the accuracy of our approximations. We note that the baseline scheme consumes slightly less transmit power than the benchmark scheme does. This benefit comes at the expense of no guaranteed communication secrecy.

Fig.~\ref{histo} shows the histograms of the average leakage SINR at the ERs. To obtain the results in these figures, for each feasible channel realization,\footnote{We refer to a channel condition as feasible if all considered methods return feasible solutions.} we first generated beamforming vectors for all methods. We then generated $1000$ random error vectors for each channel realization to test the performance of each method. The resulting leakage SINRs, SINR levels, and ER power levels were averaged over the ERs and the feasible channel realizations.
The results in Fig.~\ref{histo} indicate that the information delivered to the IRs with the baseline scheme is at a very high risk of being decoded by the ERs as the leakage SINR is more than $10$ dB higher than the ER decoding sensitivity level of -$5$ dB for all occurrences. On the other hand, the proposed methods successfully guarantee secure information transmission to the IRs as they push the leakage SINR well below the ER decoding sensitivity level for all cases. The conservatism of Method I can be observed as it pushes the leakage SINR around $0.7$ dB below the required value, i.e., $0.2$ dB lower than Method II.

\begin{figure}
\centering
    \includegraphics[width=.45\textwidth]{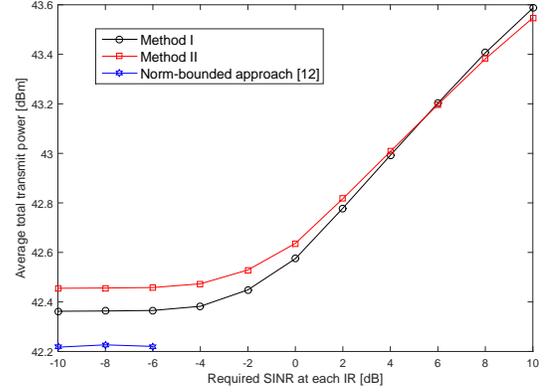}
\caption{Average total transmit power versus required IR SINR targets. The number of antennas is $M=6$. \label{NewComparisons} } \vspace{-0.1 in}
\end{figure}

In Fig. \ref{NewComparisons}, we compare our proposed methods against the norm-bounded scheme in \cite{HZhang16}. As mentioned earlier, the two-stage optimization scheme in \cite{HZhang16} can only ensure the individual secrecy rates of all IRs. In this comparison, we have given more privilege to the scheme by considering only one of those stages, i.e., \cite[Problem (36)]{HZhang16}, where we have directly set the ER decoding sensitivity level to $\gamma_i^{(t)}=-5$ dB and varied the required IR target SINR $\gamma_i$ from $-10$ dB to $10$ dB. From the figure, it is clear that the proposed methods outperform the scheme in \cite{HZhang16} in terms of power efficiency. For example, the optimization problem in \cite{HZhang16} yields infeasible solutions for all channel realizations if the required IR SINRs exceed $-6$ dB while the proposed schemes can still operate.
\begin{figure}
  \centering
  \includegraphics[width=0.45\textwidth]{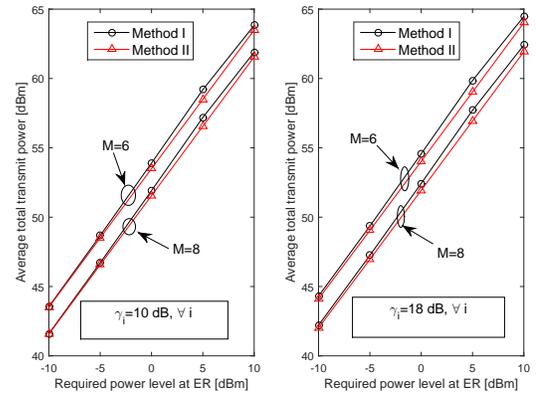}
  \caption{Average total transmit power versus required power at each ER for different required IR SINR targets.}\vspace{-0.1 in}
  \label{Tx_Erpower}
\end{figure}%
\begin{figure}
  \centering
  \includegraphics[width=0.45\textwidth]{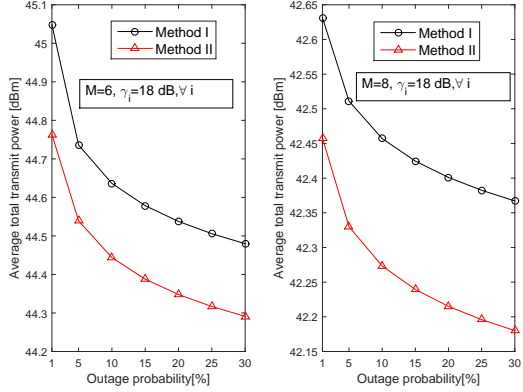}
  \caption{ Average total transmit power versus outage level for different numbers of transmit antennas.}\vspace{-0.1 in}
  \label{Tx_Erpow}
\end{figure}

Fig.~\ref{Tx_Erpower} shows the average total transmit power versus the required power level $P_t$ at the ERs for different numbers of transmit antennas and  different required IR SINRs. The average transmit power is a monotonically increasing function of the power required at the ERs. For a required receive power range from $-10$ dBm to $0$ dBm, the performances of the two proposed methods are almost identical. However, when the power required at the ER is greater than $0$ dBm, Method I consumes more power than Method II. The performance gap widens to around $0.8$ dB as the ER's power demand increases to $10$ dBm. This again confirms the improved tightness of the approximation employed in Method II compared to that employed in Method I for relatively high ER power demands. For the considered scenario, the required total transmit power is mainly determined by the required powers at the ERs which have a much more significant effect on the total transmit power than the required IR SINR, cf. Figs.~\ref{Tx_SINR} and \ref{Tx_Erpower}.

For Fig.~\ref{Tx_Erpow}, we have set identical  probability/chance values for the  maximum tolerable SINR outages, the leakage-SINR outages, and the power-transfer outages, i.e., $\rho_i=\rho_i^{(t)}=\varrho_t=\rho, \ \forall i,\ \forall t$. Hereafter, $\rho$ is referred to as outage probability. Fig.~\ref{Tx_Erpow} illustrates the average total transmit power versus the outage probability for different numbers of transmit antennas. As can be observed, a slight increase in the transmit power level significantly improves the QoS, i.e., reduces the outage probability. For instance, for $M=6$, an increase of $0.57$ dB in the transmit power of Method I and an increase of $0.47$ dB in that of Method II significantly  reduce the outage probability from $30\ \%$ to $1\ \%$.

Finally, from Figs.~\ref{Tx_SINR}, \ref{Tx_Erpower}, and \ref{Tx_Erpow}, one can conclude that increasing the number of antennas reduces the power consumption of the considered schemes. This is a result of the improved beamformer resolution due to the extra spatial degrees of freedom introduced by additional antennas.
\section{Conclusions and Future Work}
We have proposed a chance-constrained optimization problem to tackle the imperfection of the instantaneous CSI for the design of a power-efficient and secure SWIPT system. To handle non-convex QoS constraints, we have derived two robust reformulations of the proposed problem, i.e., Method I and Method II, adopting safe approximation techniques. Our analysis has revealed that Method I has a lower computational complexity than Method II. On the other hand, simulation results indicate that when the impact of imperfect CSI is more difficult to mitigate, i.e., at relatively high IR target SINRs, Method II outperforms Method I as it employs a tighter approximation. Both methods closely follow the performance of a benchmark scheme, where perfect CSI is available for resource allocation, while they outperform a baseline scheme in guaranteeing secure data transmissions. The results also show that a small increase in the transmit power leads to a significant reduction of the outage probability which in turn improves the QoS of the communication system.

A possible extension of this work is to consider a MIMO scenario where IRs and ERs are equipped with multiple-antennas. In such scenario, the expressions of the SINR, leakage SINR, and total power received at each ER will be different from those in this paper. Besides, the estimation errors will be characterized by matrices. 
As for Method I, it is possible to extend our robust beamforming design to the case of MIMO SWIPT involving error matrices. In particular, we can follow \cite[Lemma 1]{Kwan2017} and apply the S-procedure. However, as for the generalization of the proposed algorithms for Method II, the extension to the case of MIMO is a non-trivial task which requires a new problem-solving methodology and more investigation.
\section*{Acknowledgment}Tuan Anh Le, Quoc-Tuan Vien and Huan Xuan Nguyen are funded by the Newton Fund/British Council Institutional Links under Grant ID 216429427, Project code 101977. Derrick Wing Kwan Ng is supported under Australian Research Council's Discovery Early Career Researcher Award funding scheme (DE170100137). Robert  Schober is supported by the Alexander von Humboldt Professorship Program.
\appendices
\section{Proof of Theorem \ref{theo1}}\label{apen1}
Now, we investigate the structure of the optimal beamforming matrix $\mathbf{W}_i$. To this end, we rewrite \eqref{prob_sdp3} as:
\begin{equation}
\begin{aligned}\label{prob_sdp5}
& \displaystyle \min_{\{\mathbf{W}_{i}\},\{\mathbf{V}_t\}\in \mathbb{H}^{M\times M},\alpha_i,\lambda_i^{(t)},\beta_t} & & \textrm{Tr}\left(\sum_{i=1}^U\mathbf{W}_{i}
+ \sum_{t=1}^N\mathbf{V}_{t}\right)\\
& \text{s.\ t.}\ & & \mathbf{D}_i(\alpha_i)+\mathbf{E}_i^H\mathbf{A}_i\mathbf{E}_i\succeq \mathbf{0},\forall i
\\ & & &
 \mathbf{F}_i^{(t)}(\lambda_i^{(t)})+\mathbf{L}_t^H\mathbf{B}_i\mathbf{L}_t\succeq \mathbf{0},\ \forall i, \forall t,
\\ &&& \mathbf{K}_t(\beta_t)+\mathbf{L}_t^H\mathbf{C}\mathbf{L}_t\succeq \mathbf{0},\ \forall t,
 \\ &&& \alpha_i\geq0, \ \forall i,\  \lambda_i^{(t)}\geq 0, \ \forall i, \forall t, \ \beta_t\geq 0, \ \forall t, \\ &&& \mathbf{W}_{i}\succeq \mathbf{0},\ \forall i, \
   \mathbf{V}_{t}\succeq \mathbf{0},\  \forall t, \
\end{aligned}
\end{equation}
where
\begin{eqnarray}
\mathbf{D}_i(\alpha_i)&=& \begin{bmatrix} \alpha_i\mathbf{I}_M&\mathbf{0}_{M \times 1}\nonumber \\ \mathbf{0}_{1 \times M}&-\sigma^2_{I,i}-\alpha_i
\frac{\mathfrak{T}_m\left( 1-\rho_i\right)}{2}\end{bmatrix}, \nonumber\\
 \mathbf{E}_i&=& \begin{bmatrix} \mathbf{H}_i^{1/2}&\tilde{\mathbf{h}}_i\end{bmatrix}, \nonumber\\
\mathbf{F}_i^{(t)}(\lambda_i^{(t)})&=& \begin{bmatrix} \lambda_i^{(t)}\mathbf{I}_M&\mathbf{0}_{M \times 1}\nonumber \\ \mathbf{0}_{1 \times M}&-\sigma^2_{E,t}-\lambda_i^{(t)}\Omega\end{bmatrix}, \nonumber  \\
\mathbf{L}_t&=& \begin{bmatrix} \mathbf{G}_t^{1/2}&\tilde{\mathbf{g}}_t\end{bmatrix},\ \textrm{and} \nonumber\\
\mathbf{K}_t(\beta_t)&=& \begin{bmatrix} \beta_t\mathbf{I}_M&\mathbf{0}_{M \times 1}\nonumber \\ \mathbf{0}_{1 \times M}&-
P_t-\beta_t\Omega\end{bmatrix}.\nonumber
\end{eqnarray}

The Lagrangian of \eqref{prob_sdp5} can be expressed as
\begin{eqnarray}
&{}&\mathfrak{L}\left(\{\mathbf{W}_{i}\},\{\mathbf{V}_{t}\},\Upsilon \right)=\sum_{i=1}^U\textrm{Tr}\left(\mathbf{W}_{i}\right)
+ \sum_{t=1}^N\textrm{Tr}\left(\mathbf{V}_{t}\right)\nonumber \\&{}&-\sum_{i=1}^U\textrm{Tr}\left(
\mathbf{Q}_i\left[ \mathbf{D}_i(\alpha_i)+\mathbf{E}_i^H\mathbf{A}_i\mathbf{E}_i\right]\right)
-\sum_{i=1}^U\kappa_i \alpha_i\nonumber \\
&{}&-\sum_{i=1}^U\sum_{t=1}^N\textrm{Tr}\left(\mathbf{R}_i^{(t)}\left[ \mathbf{F}_i^{(t)}(\lambda_i^{(t)})+\mathbf{L}_t^H\mathbf{B}_i\mathbf{L}_t\right]\right)
\nonumber \\
&{}&-\sum_{i=1}^U\sum_{t=1}^N \mu_i^{(t)} \lambda_i^{(t)}-\sum_{t=1}^N\textrm{Tr}\left( \mathbf{S}_t\left[\mathbf{K}_t(\beta_t)+\mathbf{L}_t^H\mathbf{C}\mathbf{L}_t \right]\right)\nonumber \\
&{}&-\sum_{t=1}^N\psi_t \beta_t -\sum_{i=1}^U\textrm{Tr}\left( \mathbf{T}_i \mathbf{W}_i\right)
-\sum_{t=1}^N\textrm{Tr}\left( \mathbf{Z}_t \mathbf{V}_t\right),\label{lagran}
\end{eqnarray}
where $\mathbf{Q}_i$, $\kappa_i$, $\mathbf{R}_i^{(t)}$, $\mu_i^{(t)}$, $\mathbf{S}_t$, $\psi_t$, $\mathbf{T}_i$, and $\mathbf{Z}_t$ are the Lagrange multipliers associated with the constraints in \eqref{prob_sdp5}. Furthermore, we have
 $\boldsymbol\alpha=\begin{bmatrix} \alpha_{1},\cdots,\alpha_{U}\end{bmatrix}^T$, $\boldsymbol\mu=\begin{bmatrix} \mu_1^1,\cdots,\mu_U^1,\cdots,
\mu_1^N,\cdots,\mu_U^N,\end{bmatrix}^T$, $\boldsymbol\kappa=\begin{bmatrix} \kappa_{1},\cdots,\kappa_{U}\end{bmatrix}^T$,
$\boldsymbol\psi=\begin{bmatrix} \psi_{1},\cdots,\psi_{N}\end{bmatrix}^T$,
$\{\mathbf{Q}_{i}\}=\{\mathbf{Q}_{1},\cdots,\mathbf{Q}_{U} \}$,
$\{\mathbf{R}_{i}^{(t)}\}=\{\mathbf{R}_{1}^{(1)},\cdots,\mathbf{R}_{U}^{(1)},\cdots,
 \mathbf{R}_{1}^{(N)},\cdots,\mathbf{R}_{U}^{(N)}\}$,
$\{\mathbf{S}_t\}=\{\mathbf{S}_{1},\cdots,\mathbf{S}_{N}\}$,
$\{\mathbf{T}_{i}\}=\{\mathbf{T}_{1},\cdots,\mathbf{T}_{U} \}$,
 $\{\mathbf{Z}_{t}\}=\{\mathbf{Z}_{1},\cdots,\mathbf{Z}_{N} \}$, and finally \\ $\Upsilon=\left\{
\boldsymbol\alpha,\boldsymbol\kappa,\boldsymbol\mu,
\boldsymbol\psi,\{\mathbf{Q}_{i}\},\{\mathbf{R}_{i}^{(t)}\},
\{\mathbf{S}_{t}\},\{\mathbf{T}_{i}\},\{\mathbf{Z}_t\}\right\}
$.

Let $g\left(\Upsilon \right)$ be the dual function of \eqref{prob_sdp5} given by
\begin{eqnarray}\label{dual_func1}
g\left(\Upsilon \right)=\displaystyle \min_{\{\mathbf{W}_{i}\},\{\mathbf{V}_{t}\}\in \mathbb{H}^{M\times M}}
\mathfrak{L}\left(\{\mathbf{W}_{i}\},\{\mathbf{V}_{t}\},\Upsilon\right). \end{eqnarray}
Therefore, the corresponding dual problem of \eqref{prob_sdp5} can be stated as:
\begin{equation}\label{dual}
\begin{aligned}
&\displaystyle \max_{\Upsilon}
 &&g\left(\Upsilon \right),\\
& \text{s.\ t.}\ & &\boldsymbol\alpha\succcurlyeq \mathbf{0},\boldsymbol\kappa\succcurlyeq \mathbf{0}, \ \boldsymbol\mu\succcurlyeq \mathbf{0},\ \boldsymbol\psi\succcurlyeq \mathbf{0},\\&&& \mathbf{Q}_{i} \succeq \mathbf{0},\mathbf{T}_i \succeq \mathbf{0},\  \mathbf{S}_{t} \succeq \mathbf{0},\mathbf{Z}_t \succeq \mathbf{0}, \ \mathbf{R}_i^{(t)} \succeq \mathbf{0},\ \forall i, \ \forall t.
\end{aligned}
\end{equation}

Since problem \eqref{prob_sdp5} is convex and satisfies Slater's constraint qualification \cite{Boyd_convex}, the duality gap is zero and the optimal solution of \eqref{prob_sdp5} can be obtained by solving \eqref{dual}.

Let $\Upsilon^{\star}=\{
\boldsymbol\alpha^{\star},\boldsymbol\kappa^{\star},\boldsymbol\mu^{\star},
\boldsymbol\psi^{\star},\{\mathbf{Q}_{i}^{\star}\},\{\mathbf{R}_{i}^{(t){\star}}\},\{\mathbf{S}_{t}^{\star}\},
\{\mathbf{T}_{i}^{\star}\},\{\mathbf{Z}_t^{\star}\}\}$ be the optimal solution to dual problem \eqref{dual}, then the corresponding
optimal solution $\left(\{\mathbf{W}_{i}^{\star}\},\{\mathbf{V}_t^{\star}\}\right)$ to problem \eqref{prob_sdp5}
can be obtained as
\begin{eqnarray}\label{dual_func}
g(\Upsilon^{\star})=\displaystyle \min_{\{\mathbf{W}_{i}\},\{\mathbf{V}_{t}\}\in \mathbb{H}^{M\times M}}
\mathfrak{L}\left(\{\mathbf{W}_{i}\},\{\mathbf{V}_{t}\},\Upsilon^{\star}\right).
\end{eqnarray}

Substituting for $\mathbf{A}_{i}$, $\mathbf{B}_{i}$, and $\mathbf{C}$ in \eqref{lagran}, after some mathematical manipulations, one can arrive at
\begin{equation}
\mathfrak{L}\left(\{\mathbf{W}_{i}\},\{\mathbf{V}_{t}\},\Upsilon^{\star}\right)=\sum_{i=1}^U \textrm{Tr}\left(\mathbf{\Theta}_{i} \mathbf{W}_{i}\right)+\sum_{t=1}^N \textrm{Tr}\left(\mathbf{\Xi}_{t} \mathbf{V}_{t}\right)+\chi
\end{equation}
where $\mathbf{\Theta}_{i}=\mathbf{I}_M -\frac{1}{\gamma_i}\mathbf{E}_i\mathbf{Q}_i^{\star}\mathbf{E}_i^H+
\sum_{t=1}^N\frac{1}{\gamma_i^{(t)}}\mathbf{L}_t\mathbf{R}_i^{(t)\star}\mathbf{L}_t^H-\sum_{t=1}^N\mathbf{L}_t\mathbf{S}_t^{\star}\mathbf{L}_t^H-\mathbf{T}_i^{\star}$, $\mathbf{\Xi}_{t}=\mathbf{I}_M +\sum_{i=1}^U\mathbf{E}_i\mathbf{Q}_i^{\star}\mathbf{E}_i^H+
\sum_{p=1}^N\sum_{i=1}^U\mathbf{L}_p\mathbf{R}_i^{(p)\star}\mathbf{L}_p^H-\sum_{p=1}^N\mathbf{L}_p\mathbf{S}_p^{\star}\mathbf{L}_p^H-\mathbf{Z}_t^{\star}$,
and
\begin{eqnarray}
&{}&\chi=-\sum_{i=1}^U\textrm{Tr}\left(
\mathbf{Q}_i^{\star}\left[ \mathbf{D}_i(\alpha_i^{\star})+\mathbf{E}_i^H\left(\sum_{j=1,j\neq i}^U\mathbf{W}_j\right)\mathbf{E}_i\right]\right)\nonumber \\
&{}&-\sum_{i=1}^U\sum_{t=1}^N\textrm{Tr}\left(\mathbf{R}_i^{(t)\star}\left[ \mathbf{F}_i^{(t)}(\lambda_i^{(t)\star})+\mathbf{L}_t^H\left(\sum_{j=1,j\neq i}^U\mathbf{W}_j\right)\mathbf{L}_t\right]\right)\nonumber \\
&{}&-\sum_{i=1}^U\kappa_i^{\star} \alpha_i^{\star}-\sum_{i=1}^U\sum_{t=1}^N \mu_i^{(t)\star} \lambda_i^{(t)\star}-\sum_{t=1}^N \psi_t^{\star}\beta_t^{\star}.\label{xi}
\end{eqnarray}

Therefore, we can rewrite \eqref{dual_func} as
\begin{eqnarray}\label{dual_func2}
g(\Upsilon^{\star})
=\displaystyle \min_{\{\mathbf{W}_{i}\}\in \mathbb{H}^{M\times M}} \sum_{i=1}^U \textrm{Tr}\left(\mathbf{\Theta}_{i} \mathbf{W}_{i}\right)+
\displaystyle \min_{\{\mathbf{V}_{t}\}\in \mathbb{H}^{M\times M}} \sum_{t=1}^N \textrm{Tr}\left(\mathbf{\Xi}_{t} \mathbf{V}_{t}\right).
\end{eqnarray}

Since \eqref{prob_sdp3} is feasible, the optimal value of its equivalent form, i.e., \eqref{prob_sdp5}, is non-negative. Furthermore, the optimal duality gap between primary problem \eqref{prob_sdp5} and its Lagrange dual problem \eqref{dual} is zero.
Therefore, $\mathbf{\Theta}_{i}$ and $\mathbf{\Xi}_{t}$ must be positive semi-definite, i.e.,
$\mathbf{\Theta}_{i}\succeq \mathbf{0},\ \forall i,$ and $\mathbf{\Xi}_{t}\succeq \mathbf{0},\ \forall t$, to ensure that the Lagrangian dual function is bounded from below, i.e., the Lagrangian dual function cannot become -$\infty$. We continue by introducing the following proposition:

\begin{pro}\label{proposition1}
If an $M\times M$ Hermitian matrix $\mathbf{W}_{i}$ has a rank of $K\leq M$, then it can be expressed as
$\mathbf{W}_{i}=\sum_{j=1}^{K}\nu_{i,k}
\mathbf{a}_{i,k}\mathbf{a}_{i,k}^\text{H}$,
where $\nu_{i,k}$ and $\mathbf{a}_{i,k}$ are the $k$th non-zero eigenvalue and the
corresponding eigenvector of $\mathbf{W}_{i}$, respectively.\footnote{Proposition \ref{proposition1} can be proved using the facts that the Hermitian matrix $\mathbf{W}_{i}$ has $K$ real non-zero eigenvalues and $K$ orthogonal eigenvectors.}
\end{pro}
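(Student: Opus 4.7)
The plan is to invoke the spectral theorem for Hermitian matrices and then exploit the rank hypothesis to truncate the resulting decomposition to exactly $K$ nonzero terms, which is precisely the route suggested by the footnote. First, since $\mathbf{W}_{i}\in\mathbb{H}^{M\times M}$, the spectral theorem yields an orthonormal basis of eigenvectors $\mathbf{a}_{i,1},\ldots,\mathbf{a}_{i,M}\in\mathbb{C}^{M\times 1}$ with corresponding real eigenvalues $\nu_{i,1},\ldots,\nu_{i,M}\in\mathbb{R}$, and a unitary diagonalization $\mathbf{W}_{i}=\mathbf{U}_{i}\boldsymbol{\Lambda}_{i}\mathbf{U}_{i}^{H}$, where $\mathbf{U}_{i}=[\mathbf{a}_{i,1},\ldots,\mathbf{a}_{i,M}]$ and $\boldsymbol{\Lambda}_{i}=\mathrm{diag}(\nu_{i,1},\ldots,\nu_{i,M})$. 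Expanding this product column-wise gives the full spectral expansion $\mathbf{W}_{i}=\sum_{k=1}^{M}\nu_{i,k}\mathbf{a}_{i,k}\mathbf{a}_{i,k}^{H}$.

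Next, I would link the hypothesis $\mathrm{rank}(\mathbf{W}_{i})=K$ to the number of nonzero eigenvalues. Because $\mathbf{U}_{i}$ is unitary and therefore invertible, the identity $\boldsymbol{\Lambda}_{i}=\mathbf{U}_{i}^{H}\mathbf{W}_{i}\mathbf{U}_{i}$ shows that $\mathrm{rank}(\mathbf{W}_{i})=\mathrm{rank}(\boldsymbol{\Lambda}_{i})$, since multiplication by invertible matrices preserves rank. As $\boldsymbol{\Lambda}_{i}$ is diagonal, its rank equals the number of its nonzero diagonal entries, so exactly $K$ of the eigenvalues $\nu_{i,k}$ are nonzero and the remaining $M-K$ vanish. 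Relabelling the eigenpairs (without loss of generality) so that $\nu_{i,1},\ldots,\nu_{i,K}$ are the nonzero eigenvalues and $\mathbf{a}_{i,1},\ldots,\mathbf{a}_{i,K}$ are the associated eigenvectors, every term with index $k>K$ contributes the zero matrix to the spectral sum.

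Combining the two observations collapses the full-index expansion to the desired $K$-term representation
\begin{equation}
\mathbf{W}_{i}=\sum_{k=1}^{K}\nu_{i,k}\mathbf{a}_{i,k}\mathbf{a}_{i,k}^{H},
\end{equation}
with the indexing convention matching that of the proposition (the summation index $j$ in the statement being identified with $k$). There is no genuine obstacle here; the result is a direct corollary of the spectral theorem, and the only bookkeeping step is confirming that rank equals the number of nonzero eigenvalues, which follows immediately from the unitary invariance of rank. The proposition is invoked later (for instance when extracting the rank-one beamformers $\mathbf{w}_{i}^{\star}=\sqrt{\nu_{i,1}}\,\mathbf{a}_{i,1}$ in the $K=1$ case) as the precise algebraic bridge between rank and explicit decomposition.
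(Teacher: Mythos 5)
Your proof is correct and follows exactly the route the paper itself indicates: its only ``proof'' of Proposition~\ref{proposition1} is the footnote citing that a Hermitian matrix has real eigenvalues and orthogonal eigenvectors, i.e., the spectral theorem. You have simply fleshed out that sketch in full detail (unitary diagonalization, rank invariance under unitary conjugation, truncation of the zero-eigenvalue terms), with no gaps.
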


In the following, we prove the rank-one property of the solution of \eqref{dual_func2} by contradiction.  Assuming that the optimal solution of \eqref{dual_func2}, $\mathbf{W}_{i}^{\star}$, has rank $K>1$, $\forall i$. Proposition \ref{proposition1} indicates that  $\mathbf{W}_{i}^{\star}=\sum_{k=1}^K\nu_{i,k}
\mathbf{a}_{i,k}\mathbf{a}_{i,k}^\text{H}$. Now we construct another feasible solution to \eqref{dual_func2} as
\begin{equation}\label{subop}
\overline{\mathbf{W}}_{i}^{\star}=\nu_{i,p} \mathbf{a}_{i,p}\mathbf{a}_{i,p}^\text{H}, \ \forall i,
\end{equation}
where $p=\displaystyle \textrm{arg}\min_{k\in\{1,\cdots,K\}}\nu_{i,k}
\mathbf{a}_{i,k}^\text{H}\mathbf{\Theta}_{i}\mathbf{a}_{i,k}$. Combining $\mathbf{\Theta}_{i}\succeq \mathbf{0}$ and \eqref{subop} reveals that
\begin{equation}\label{contradict}
\sum_{i=1}^U \textrm{Tr}\left(\mathbf{\Theta}_{i} \overline{\mathbf{W}}_{i}^{\star}\right)<
\sum_{i=1}^U  \textrm{Tr}\left(\mathbf{\Theta}_{i} \mathbf{W}_{i}^{\star}\right).
 \end{equation}
The inequality in \eqref{contradict} contradicts the optimality of $\mathbf{W}_{i}^{\star}$. Therefore, $\mathbf{W}_{i}^{\star}$ must be a rank-one matrix for all $i$. Following similar arguments, one can show that $\mathbf{V}_{t}^{\star}$ must be a rank-one matrix for all $t$.
\section{Proof of Theorem \ref{theo2}}\label{apen2}
This proof is based on a fact that the SOC constraints in \eqref{const12new}, \eqref{const22}, and \eqref{const32} can be cast as LMIs. With some abuse of notation, in this proof we reuse $\mathbf{B}_i$, $\mathbf{C}$, $\mathbf{D}_i$, $\mathbf{E}_i$, $\mathbf{T}_i$, $\mathbf{P}_i$, $\mathbf{K}_p$, and $\mathbf{u}_i$ from Section III and Appendix \ref{apen1}.  In the sequel, we first show the transformation of \eqref{const12new}. 
Exploiting the Schur complement with some mathematical manipulations,  \eqref{const12new} can be equivalently written as $\mathbf{B}_i \succeq \mathbf{0}$ where
\small
\begin{eqnarray}
\mathbf{B}_i=\begin{bmatrix}\theta_i \mathbf{I}_{M^2+M} & \begin{bmatrix}
\sqrt{2}\mathbf{H}_i^{1/2}\mathbf{A}_i
\widetilde{\mathbf{h}}_i \\
\textrm{vec}\left( \mathbf{H}_i^{1/2}\mathbf{A}_i\mathbf{H}_i^{1/2}\right)
\end{bmatrix}\\
\begin{bmatrix}
\sqrt{2}\mathbf{H}_i^{1/2}\mathbf{A}_i
\widetilde{\mathbf{h}}_i \\
\textrm{vec}\left( \mathbf{H}_i^{1/2}\mathbf{A}_i\mathbf{H}_i^{1/2}\right)
\end{bmatrix}^H&\theta_i\end{bmatrix}.\label{theo2a1}
\end{eqnarray}
\normalsize
We further decompose  $\mathbf{B}_i$ as
\begin{equation}\label{P22}
\mathbf{B}_i=\mathbf{C}\left( \theta_i\right)+\mathbf{D}_i+\mathbf{D}_i^H+\mathbf{E}_i+\mathbf{E}_i^H
\end{equation}
where
\begin{eqnarray}
\mathbf{C}\left( \theta_i\right)&=&\begin{bmatrix}\theta_i \mathbf{I}_{M^2+M} & {\mathbf{0}}_{(M^2+M) \times 1} \\
{\mathbf{0}}_{1 \times (M^2+M)}^H&\theta_i\end{bmatrix},\nonumber \\
\mathbf{D}_i&=&\begin{bmatrix} \mathbf{0}_{(M^2+M)\times(M^2+M)} &
{\mathbf{0}}_{(M^2+M) \times 1}\\
\begin{bmatrix}
\sqrt{2}\mathbf{H}_i^{1/2}\mathbf{A}_i
\widetilde{\mathbf{h}}_i \\
{\mathbf{0}}_{M^2\times 1}
\end{bmatrix}^H&0\end{bmatrix},\nonumber \\
\mathbf{E}_i&=&\begin{bmatrix} \mathbf{0}_{(M^2+M)\times (M^2+M)} & {\mathbf{0}}_{(M^2+M) \times 1}\\
\begin{bmatrix}
{\mathbf{0}}_{M\times 1}\\
\textrm{vec}\left( \mathbf{H}_i^{1/2}\mathbf{A}_i\mathbf{H}_i^{1/2}\right)
\end{bmatrix}^H& 0\end{bmatrix}.\nonumber
\end{eqnarray}
Moreover, matrix $\mathbf{D}_i$ can be further decomposed as
\begin{eqnarray}
\mathbf{D}_i&=&\sqrt{2}\begin{bmatrix}
{\mathbf{0}}_{(M^2+M) \times 1}\\ 1 \end{bmatrix}\widetilde{\mathbf{h}}_i ^H
\mathbf{A}_i \mathbf{H}_i^{1/2}\begin{bmatrix} \mathbf{I}_M& {\mathbf{0}}_{M \times 1}& \cdots & {\mathbf{0}}_{M \times 1}\end{bmatrix}\nonumber\\&=&\mathbf{T}_i\left( \widetilde{\mathbf{h}}_i\right) \mathbf{A}_i \mathbf{P}_i\left( \mathbf{H}_i^{1/2}\right),\label{theo2a3}
\end{eqnarray}
where
$\mathbf{T}_i\left( \widetilde{\mathbf{h}}_i\right)=\sqrt{2}\begin{bmatrix}
{\mathbf{0}}_{(M^2+M) \times 1}\\ 1 \end{bmatrix}\widetilde{\mathbf{h}}_i ^H$,  $\mathbf{P}_i\left( \mathbf{H}_i^{1/2}\right)$ = $\mathbf{H}_i^{1/2}\begin{bmatrix} \mathbf{I}_M& {\mathbf{0}}_{M\times 1}& \cdots & {\mathbf{0}}_{M \times 1}\end{bmatrix}$,  and $\begin{bmatrix} \mathbf{I}_M& {\mathbf{0}}_{M \times 1}& \cdots & {\mathbf{0}}_{M \times 1}\end{bmatrix}$ is an $M\times(M^2+M+1)$ matrix. Let
$\mathbf{U}_i=\begin{bmatrix}
{\mathbf{0}}_{M\times 1}&\cdots &{\mathbf{0}}_{M\times 1}& \mathbf{u}_i \end{bmatrix}^T$
be an $(M^2+M+1)\times M$ matrix where $\mathbf{u}_i$ is an $M \times 1$ vector with $1$ at the $i$-th entry and all zeros elsewhere. Furthermore, let
$\mathbf{K}_i=\begin{bmatrix}
\underbrace{\mathbf{0}_{M \times M}}_{1\textrm{st}}&\mathbf{0}_{M \times M}&\cdots&\underbrace{\mathbf{I}_M}_{i\textrm{th}}&\cdots& \underbrace{\mathbf{0}_{M\times M}}_{(M^2+M)\textrm{th}}& {\mathbf{0}}_{M \times 1} \end{bmatrix}$
be an $M\times(M^2+M+1)$ matrix with $\mathbf{I}_M$ as the $i$-th block and all zeros elsewhere. One can express $\mathbf{E}_i$ as
\begin{eqnarray}
\mathbf{E}_i=\sum_{p=1}^M \mathbf{U}_p \mathbf{H}_i^{1/2}\mathbf{A}_i\mathbf{H}_i^{1/2}\mathbf{K}_p.\label{theo2a4}
\end{eqnarray}

From \eqref{theo2a1}, \eqref{P22}, \eqref{theo2a3}, and \eqref{theo2a4}, we conclude that constraint \eqref{const12new} is equivalent to the following LMI constraint
\begin{eqnarray}\small\label{const12C}
\mathbf{C}\left( \theta_i\right)+\mathbf{T}_i\left( \widetilde{\mathbf{h}}_i\right) \mathbf{A}_i \mathbf{P}_i\left( \mathbf{H}_i^{1/2}\right)+\left[\mathbf{T}_i\left( \widetilde{\mathbf{h}}_i\right) \mathbf{A}_i \mathbf{P}_i\left( \mathbf{H}_i^{1/2}\right)\right]^H\nonumber\\
+\sum_{p=1}^M \mathbf{U}_p \mathbf{H}_i^{1/2}\mathbf{A}_i\mathbf{H}_i^{1/2}\mathbf{K}_p+\left[\sum_{p=1}^M \mathbf{U}_p \mathbf{H}_i^{1/2}\mathbf{A}_i\mathbf{H}_i^{1/2}\mathbf{K}_p \right]^H \succeq \mathbf{0}.
\end{eqnarray}
Similarly, constraint \eqref{const22} can be equivalently cast in LMI form as
\begin{eqnarray}\label{const22C}
\mathbf{C}\left( \theta_i^{(t)}\right)+\mathbf{T}_i\left( \widetilde{\mathbf{g}}_t\right) \mathbf{B}_i \mathbf{P}_i\left( \mathbf{G}_t^{1/2}\right)+\left[\mathbf{T}_i\left( \widetilde{\mathbf{g}}_t\right) \mathbf{B}_i \mathbf{P}_i\left( \mathbf{G}_t^{1/2}\right)\right]^H\nonumber\\
+\sum_{p=1}^M \mathbf{U}_p \mathbf{G}_t^{1/2}\mathbf{B}_i\mathbf{G}_t^{1/2}\mathbf{K}_p+\left[\sum_{p=1}^M \mathbf{U}_p \mathbf{G}_t^{1/2}\mathbf{B}_i\mathbf{G}_t^{1/2}\mathbf{K}_p \right]^H \succeq \mathbf{0}.
\end{eqnarray}
Finally, constraint \eqref{const32} can be equivalently written as the following LMI
\begin{eqnarray}\label{const32C}
\mathbf{C}\left( a_t\right)+\mathbf{T}_i\left( \widetilde{\mathbf{g}}_t\right) \mathbf{C} \mathbf{P}_i\left( \mathbf{G}_t^{1/2}\right)+\left[\mathbf{T}_i\left( \widetilde{\mathbf{g}}_t\right) \mathbf{C} \mathbf{P}_i\left( \mathbf{G}_t^{1/2}\right)\right]^H\nonumber\\
+\sum_{p=1}^M \mathbf{U}_p \mathbf{G}_t^{1/2}\mathbf{C}\mathbf{G}_t^{1/2}\mathbf{K}_p+\left[\sum_{p=1}^M \mathbf{U}_p \mathbf{G}_t^{1/2}\mathbf{C}\mathbf{G}_t^{1/2}\mathbf{K}_p \right]^H \succeq \mathbf{0}.
\end{eqnarray}

Using \eqref{const12C}-\eqref{const32C}, one can rewrite \eqref{prob_sdp_bernstein} as
\begin{equation}
\begin{aligned}\label{prob_sdp_bernstein_Equivalent}
& \displaystyle \min_{\{\mathbf{W}_{i}\},\{\mathbf{V}_t\}\in \mathbb{H}^{M\times M},\theta_i,\vartheta_i,\theta^{(t)}_i,\vartheta^{(t)}_i,a_t,b_t} & & \textrm{Tr}\left(\sum_{i=1}^U\mathbf{W}_{i}
+ \sum_{t=1}^N\mathbf{V}_{t}\right)\\
& \text{s.\ t.}\ & & \eqref{const11}, \eqref{const13}, \eqref{const12C}, \ \forall i;
\\&&&\eqref{const21},\eqref{const23}, \eqref{const22C},  \ \forall i, \ \forall t;\\&&& \eqref{const31}, \eqref{const33},\eqref{const32C}, \ \forall t,\\
&&&\vartheta_i\geq 0,\ \forall i, \ \ b_t\geq 0,\ \forall t, \\ &&& \vartheta_i^{(t)}\geq 0, \ \forall i, \ \forall t, \\&&&
 \ \mathbf{W}_{i}\succeq \mathbf{0},\ \forall i, \ \mathbf{V}_{t}\succeq \mathbf{0},\ \forall t.
\end{aligned}
\end{equation}
The same technique as presented in Appendix \ref{apen1} can now be adopted to prove that the optimal beamforming matrices $\{\mathbf{W}_{i}\}$ and $\{\mathbf{V}_{t}\}$ in \eqref{prob_sdp_bernstein_Equivalent} are always rank-one if the problem is feasible. In the following, we sketch some of the main steps.

Let $\mathfrak{\overline{L}}\left(\{\mathbf{W}_{i}\},\{\mathbf{V}_{t}\},\overline{\Upsilon}\right)$  be the Lagrangian of \eqref{prob_sdp_bernstein_Equivalent} where $\overline{\Upsilon}$ represents the collection of all Lagrangian multipliers associated with the constraints in \eqref{prob_sdp_bernstein_Equivalent}. Furthermore, let $\overline{g}\left(\overline{\Upsilon} \right)$ be the dual function of \eqref{prob_sdp_bernstein_Equivalent}, i.e., $\overline{g}\left(\overline{\Upsilon} \right)=\displaystyle \min_{\{\mathbf{W}_{i}\},\{\mathbf{V}_{t}\}}
\mathfrak{\overline{L}}\left(\{\mathbf{W}_{i}\},\{\mathbf{V}_{t}\},\overline{\Upsilon}\right)$.
As problem \eqref{prob_sdp_bernstein_Equivalent} is convex and satisfies Slater's constraint qualification \cite{Boyd_convex}, the duality gap is zero and the optimal solution of \eqref{prob_sdp_bernstein_Equivalent} can be obtained by solving its dual problem. Let $\overline{\Upsilon}^{\star}$ be the optimal solution of the dual problem of \eqref{prob_sdp_bernstein_Equivalent}. Then, the corresponding optimal solution $\left(\{\mathbf{W}_{i}^{\star}\},\{\mathbf{V}_t^{\star}\}\right)$ to problem \eqref{prob_sdp_bernstein_Equivalent}
can be obtained as
\begin{eqnarray}\label{dual_funcTheo2}
\overline{g}(\overline{\Upsilon}^{\star})=\displaystyle \min_{\{\mathbf{W}_{i}\},\{\mathbf{V}_{t}\}}
\mathfrak{\overline{L}}\left(\{\mathbf{W}_{i}\},\{\mathbf{V}_{t}\},\overline{\Upsilon}^{\star}\right).
\end{eqnarray}

After some mathematical manipulations, one can rewrite \eqref{dual_funcTheo2} in the following form
\begin{eqnarray}\label{dual_funcNew2}
\overline{g}(\overline{\Upsilon}^{\star})
=\displaystyle \min_{\{\mathbf{W}_{i}\}} \sum_{i=1}^U \textrm{Tr}\left(\mathbf{\overline{\Theta}_{i}} \mathbf{W}_{i}\right)+
\displaystyle \min_{\{\mathbf{V}_{t}\}} \sum_{t=1}^N \textrm{Tr}\left(\mathbf{\overline{\Xi}_{t}} \mathbf{V}_{t}\right),
\end{eqnarray}
where $\overline{\Theta}_{i}$ and $\overline{\Xi}_{t}$ are functions of
$\mathbf{C}\left( \theta_i\right)$, $\mathbf{T}_i\left( \widetilde{\mathbf{h}}_i\right)$, $\mathbf{P}_i\left( \mathbf{H}_i^{1/2}\right)$, $\mathbf{U}_p$, $\mathbf{K}_p$, $\mathbf{C}\left( \theta_i^{(t)}\right)$, $\mathbf{T}_i\left( \widetilde{\mathbf{g}}_t\right)$, $\mathbf{C}\left( a_t\right)$, and $\overline{\Upsilon}^{\star}$. By using a similar contradiction approach as in the proof of Theorem \ref{theo1}, one can show that the optimal solution, $\{\mathbf{W}_{i}^{\star}\},\{\mathbf{V}_t^{\star}\}$, to problem \eqref{dual_funcNew2}, which is also the optimal solution to \eqref{prob_sdp_bernstein_Equivalent}, can only involve rank-one matrices.
\bibliographystyle{IEEEtran}

\end{document}